\newenvironment{sequation}{\begin{equation}\small}{\end{equation}}
\newenvironment{proof}{{\noindent\it Proof. }}{\hfill $\blacksquare$\par}
\def\BibTeX{{\rm B\kern-.05em{\sc i\kern-.025em b}\kern-.08em
    T\kern-.1667em\lower.7ex\hbox{E}\kern-.125emX}}
\definecolor{color}{rgb}{0.0, 0, 0}
\definecolor{color1}{rgb}{0, 0, 0}
\definecolor{b}{rgb}{0.0, 0, 0}
\definecolor{r}{rgb}{0, 0, 0}
\newskip\theorempreskipamount
\newskip\theorempostskipamount
\newtheorem{theorem}{Theorem}
\newtheorem{definition}{Definition}
\newtheorem{lemma}{Lemma}
\newtheorem{remark}{Remark}
\begin{document}

\title{Joint Task Offloading and Resource Allocation in Aerial-Terrestrial UAV Networks with Edge and Fog Computing for Post-Disaster Rescue}

\author{
    Geng~Sun,~\IEEEmembership{Member,~IEEE},
    Long~He,
    Zemin Sun,~\IEEEmembership{Member,~IEEE}, 
    Qingqing Wu,~\IEEEmembership{Senior Member,~IEEE}, 
    Shuang Liang,
    Jiahui Li,
    Dusit Niyato,~\IEEEmembership{Fellow,~IEEE}, and
    Victor C. M. Leung,~\IEEEmembership{Life Fellow,~IEEE}
		\thanks{This study is supported in part by the National Natural Science Foundation of China (62172186, 62002133, 61872158, 62272194), and in part by the Science and Technology Development Plan Project of Jilin Province (20230201087GX). (\textit{Corresponding authors: Zemin Sun and Shuang Liang.)}}
		\IEEEcompsocitemizethanks{\IEEEcompsocthanksitem Geng Sun, Long He, and Zemin Sun are with the College of Computer Science and Technology, Jilin University, Changchun 130012, China, and also with the Key Laboratory of Symbolic Computation and Knowledge Engineering of Ministry of Education, Jilin University, Changchun 130012, China.\protect\\
		E-mail: sungeng@jlu.edu.cn, helong0517@foxmail.com, sunzemin@jlu.edu.cn.
		\IEEEcompsocthanksitem Qingqing Wu is with the Department of Electronic Engineering, Shanghai Jiao
        Tong University, Shanghai, China.\protect\\
		E-mail: qingqingwu@sjtu.edu.cn.  
  	\IEEEcompsocthanksitem Shuang Liang is with the School of Information Science and Technology, Northeast Normal University, Changchun 130024, 
          China.\protect\\
		E-mail: liangshuang@nenu.edu.cn.
            \IEEEcompsocthanksitem Jiahui Li is with the College of Computer Science and Technology, Jilin University, Changchun 130012, China, and also with Pillar of Engineering Systems and Design, Singapore University of Technology and Design, Singapore 487372.\protect\\
            E-mail: lijiahui0803@foxmail.com.
  		\IEEEcompsocthanksitem Dusit Niyato is with the School of Computer Science and Engineering, Nanyang Technological University, Singapore 639798. \protect\\
            E-mail: dniyato@ntu.edu.sg. 
            \IEEEcompsocthanksitem Victor C. M. Leung is with the College of Computer Science and Software Engineering, Shenzhen University, Shenzhen 518060, China, and also with the Department of Electrical and Computer Engineering, University of British Columbia, Vancouver, BC V6T 1Z4, Canada. \protect\\
            E-mail: vleung@ieee.org.}
	\thanks{A small part of this paper appeared in IEEE MSN 2022~\cite{paper1}.}}

\markboth{}%
{Shell \MakeLowercase{\textit{et al.}}: Bare Demo of IEEEtran.cls for Computer Society Journals}

\IEEEtitleabstractindextext{%
\begin{abstract}		
\par Unmanned aerial vehicles (UAVs) play an increasingly important role in assisting fast-response post-disaster rescue due to their fast deployment, flexible mobility, and low cost. However, UAVs face the challenges of limited battery capacity and computing resources, which could shorten the expected flight endurance of UAVs and increase the rescue response delay during performing mission-critical tasks. To address this challenge, we first present a three-layer post-disaster rescue computing architecture by leveraging the aerial-terrestrial edge capabilities of mobile edge computing (MEC) and vehicle fog computing (VFC), which consists of a vehicle fog layer, a UAV client layer, and a UAV edge layer. Moreover, we formulate a joint task offloading and resource allocation optimization problem (JTRAOP) with the aim of maximizing the time-average system utility. Since the formulated JTRAOP is proved to be NP-hard,  we propose an MEC-VFC-aided task offloading and resource allocation (MVTORA) approach, which consists of a game theoretic algorithm for task offloading decision, a convex optimization-based algorithm for MEC resource allocation, and an evolutionary computation-based hybrid algorithm for VFC resource allocation. Simulation results validate that the proposed approach can achieve superior system performance compared to the other benchmark schemes, especially under heavy system workloads.
\end{abstract}
	
\begin{IEEEkeywords}
UAV communication,  mobile edge computing, vehicle fog computing, post-disaster rescue, task offloading.
\end{IEEEkeywords}}

\maketitle

\IEEEdisplaynontitleabstractindextext

\IEEEpeerreviewmaketitle

%
%
\section{Introduction}
\label{sec:Introduction}
\par \IEEEPARstart{N}{atural} disasters, such as earthquakes, floods, and forest fires cause serious environmental damage, incalculable economic losses, and unpredictable loss of life~\cite{intro-A1}. Especially in cities, frequent man-made disasters, such as urban fires and traffic accidents, seriously affect people's quality of life and hinder social development. Although the occurrence of disasters can be reduced by deploying pre-disaster prevention facilities, it is critical to perform timely search-and-rescue missions within the golden window of post-disaster rescue. Therefore, effective post-disaster rescue requires immediate response to the disaster with rapid network deployment, real-time information collection, and low-latency data processing. However, the harsh conditions in the disaster-stricken area, such as the severely destroyed infrastructures of the terrestrial wireless networks and complex terrain, could interfere with rescue operations~\cite{intro-A2}. 

\par {\color{b} Due to the high maneuverability, flexible deployment, line-of-sight (LoS) communication, and low cost~\cite{GuoLLTK22,guo2023intelligent}, it is practical to dispatch unmanned aerial vehicles (UAVs) to the affected area to assist in rescue missions such as disaster area monitoring, data collection, and aerial search and rescue~\cite{HayatYM16}. These rescue missions often require UAVs to perform compute-intensive computing tasks such as video processing, data analysis, and feature extraction, with strict latency requirements~\cite{introduction-2,introduction-3}. Such cases commonly existed in UAV-assisted post-disaster rescue scenarios. For instance, in~\cite{xu2023reward}, UAVs are deployed to monitor points of interest (PoIs) within a disaster-stricken region, where a PoI may denote an office building or a school building, in which people are trapped. These UAVs are required to process the collected images and data in real time and send their collected information back to the ground rescue center. In~\cite{dong2021uav}, UAVs are employed to perform aerial search and rescue (SAR) missions. These UAVs equipped with thermal cameras can capture thermal image data of victims and are required to promptly process these data sets through neural networks to effectively locate and assist victims.}


\par However, the constrained onboard battery capacity and computing capability of UAVs could restrict the endurance and efficiency in performing the intensive computing tasks of rescue missions. Fortunately, {\color{color} task offloading has emerged as a promising solution} to extend the capabilities of devices with limited computing resources and energy. A number of studies employ mobile edge computing (MEC)~\cite{Du2022,lyu2020lead}, cloud computing, or vehicle fog computing (VFC) to reduce the latency and energy consumption of task processing~\cite{intro-mec}. However, these studies may not be directly applicable to multi-UAV-assisted post-disaster rescue scenarios for the following reasons. First, the fast deployment of traditional terrestrial MEC servers could be an arduous task in disaster areas with complex terrains since the terrestrial MEC servers rely heavily on communication infrastructures. Second, due to the remote location of cloud servers, cloud computing suffers from large transmission latency, which is difficult to fulfill the low-latency demands of the computing tasks for rescue missions. Third, VFC depends on the idle resources and positions of vehicles, resulting in the unstable availability of the VFC resources.

\par {\color{color} Inspired by the current cloud-edge-device structure of 5G, we innovatively propose a three-layer post-disaster rescue computing architecture, which can be aligned and compatible with the 5G cloud-edge-device architecture.} 
Specifically, the three-layer computing architecture is comprised of a UAV edge layer, a UAV client layer, and a vehicle fog layer, which leverages MEC and VFC to exploit the aerial and terrestrial resources of the UAV network. On one hand, the aerial MEC capability provided by the edge UAV makes up for the unstable availability of the VFC resources provided by the vehicle fog nodes. On the other hand, the terrestrial VFC capability can effectively alleviate the overload caused by the limited computing resources of the edge UAV.

\par However, several fundamental challenges should be overcome to fully develop the benefits of integrating UAVs, MEC, and VFC techniques for post-disaster rescue. \textbf{i)} The offloading decision of each UAV depends not only on its own offloading demand but also on the offloading decisions of the other UAVs, which makes the offloading decisions among UAVs coupling and complex. \textbf{ii)} Various tasks of UAVs generally arrive dynamically and have stringent requirements for the offloading service. {\color{color1}However, the limited computing resource of an MEC server and the stringent demands of the UAVs could lead to competition for resources inside the MEC server, especially during peak times. Thus, under the resource constraint, it is challenging for the MEC server to determine an efficient resource allocation strategy to meet the heterogeneous and stringent demands of various tasks.} \textbf{iii)} The computing resource allocation strategy of the MEC server and the task offloading decisions of UAVs have mutual effects on each other, leading to the complexity of the decision-making. \textbf{iv)} {\color{color} Due to the mobility of rescue vehicles and UAVs and the insufficient time-varying idle computing resources that vehicles provide, it is difficult to design an effective VFC approach to fully utilize the resources of rescue vehicles.}

\par To overcome the aforementioned challenges, we propose an approach for joint optimization of MEC-VFC-aided task offloading and resource allocation to maximize the system performance. The main contributions are summarized as follows:
 
\begin{itemize}

\item \textit{System Architecture.} We employ a three-layer post-disaster rescue computing architecture in the MEC-VFC-aided aerial-terrestrial UAV network to coordinate UAVs and ground rescue vehicles to cooperatively process the computing tasks. Specifically, the proposed architecture consists of the following entities: a UAV edge layer where a large UAV is deployed to provide aerial edge capability, a UAV client layer where several small UAVs are deployed to perform the computing tasks, and a vehicle fog layer where the ground rescue vehicles with under-utilized resources are leveraged to provide terrestrial edge capability to alleviate the possible computational overload of the UAV edge layer.

\item \textit{Problem Formulation.} We formulate a novel joint task offloading and resource allocation optimization problem (JTRAOP), with the aim of maximizing the time-average system utility. Specifically, the system utility function is theoretically constructed by synthesizing the completion delay of the tasks and the energy consumption of UAVs.
	
\item \textit{Algorithm Design.} Due to the NP-hardness of JTRAOP, we propose an MEC-VFC-aided task offloading and resource allocation (MVTORA) approach that consists of two components, i.e., task offloading and computing resource allocation to solve the problem separately. For task offloading, {\color{color} we propose a game theoretic algorithm to determine the task offloading decisions}. For computing resource allocation, an convex optimization-based algorithm and an evolutionary computation-based hybrid algorithm are proposed to determine the aerial MEC resource allocation and terrestrial VFC resource allocation, respectively. The proposed MVTORA approach is theoretically proved to be stable and has polynomial computation complexity.

\item \textit{Validation.} Simulation results demonstrate that the proposed MVTORA is able to achieve superior performance in terms of time-average system utility, average task completion delay, and total energy consumption compared to several benchmark schemes. In addition, through the simulation results, we found that the proposed algorithm not only has better scalability but also can achieve significant system performance improvement when the workload is heavy.
	
\end{itemize}

\par The remainder of the article is organized as follows. Section \ref{sec:Related Works} summarizes the related work. In Section \ref{sec:System Model and Problem Formulation},  the proposed system model and problem formulation are presented. Section \ref{sec:MVTORA} proposes the MVTORA approach. In Section \ref{sec:Simulation Results and Analysis}, simulation results are displayed and analyzed. {\color{b} In Section \ref{sec:Discussion}, we present related discussions.} Finally, Section \ref{sec:Conclusion} concludes the overall paper. 
%
%
\section{Related Works}
\label{sec:Related Works}

\par In this section, we review the research work related to UAV-assisted disaster rescue, task offloading and resource allocation.
%
%
\subsection{UAV-Assisted Disaster Rescue}
\label{subsec:Applications of UAVs}
\par Due to the swift and affordable deployment, high mobility, and better LoS communication links, UAVs show great potential in aiding post-disaster rescue. Extensive efforts have been exerted to exploit the advantages of UAVs to provide flexible wireless connectivity~\cite{review-A1,review-A3,Kang2023Personalized} and (or) computing capability \cite{review-A6,guo2023multi,guo2022achieve}. For example, Zhang et al.~\cite{review-A1} focused on UAV-enabled radio communications where two UAVs were deployed to relay satellite data to ground vehicles. Mozaffari et al.~\cite{review-A3} explored the scenario where UAVs were deployed as aerial base stations to collect information on ground Internet of Things (IoT) devices. {\color{b}Kang et al.~\cite{Kang2023Personalized} focused on UAV image-sensing-driven task oriented semantic communications scenarios, where the effects of dynamic wireless fading channel on semantic transmission mathematically were studied.} Tong et al.~\cite{review-A6} proposed a UAV-enabled multi-hop cooperative fog computing system, where UAVs assist to provide wireless communication and computing services simultaneously to the user equipment. {\color{b}Guo et al.~\cite{guo2023multi} considered multiple UAV-enabled aerial computing to break the computing resources and coverage limitations of a single UAV. In~\cite{guo2022achieve}, the authors studied a dynamic UAV edge computing IoT network framework, aiming to stably provide fast communication and computing services over a long period of time.} {\color{b} The above work has undoubtedly
explored the feasibility and effectiveness of UAV providing communication and computing capabilities. However, these studies are mainly devoted to exploiting the communication and computation capabilities of UAVs as service entities, while their limited resources and battery capacity as client entities are not extensively considered.}

\par Different from the traditional wireless communication scenarios, in the post-disaster rescue scenarios, the UAVs are deployed as client entities to assist in various rescue missions that generate various delay-sensitive and computation-intensive computational tasks. However, limited by the computation processing capability and onboard battery power, it is difficult for UAVs to satisfy the stringent requirements of rescue missions. To deal with the limited resources, several studies target on exploring the under-utilized fog resources. For example, in~\cite{review-A7}, a swarm of UAVs with idle resources was empowered by the capability of edge computing to act as a UAV fog. The task generated by a UAV is divided into multiple sub-tasks where one part is processed locally and the other parts are offloaded to the nearby fog UAVs. Similarly, in~\cite{review-A8}, by exploring the available resources of unmanned ground vehicles (UGVs), the computing tasks of UAVs can be offloaded to the VFC nodes composed of UGVs. However, the UAV fog is also limited by the restricted computation resources and battery storage, and the mobility of the VFC node constrains the efficiency of conducting urgent rescue tasks.
%
%
\subsection{Task Offloading}
\label{subsec:Computation offloading in MEC and VFC}
\par Task offloading has been widely studied as an effective way to alleviate the limited computing resources and energy supply of devices. In the past ten years, numerous research studies have focused on offloading computation-intensive tasks to remote cloud servers, which provide sufficient computing resources~\cite{Hoang2013}. For instance, Abd et al.~\cite{review-B1-1} developed a cloud computing model that supports on-demand internet access and ubiquitous access to different resource bundles for configurable computing. {\color{color1} Deng et al.~\cite{review-B1-2} investigated the mobile computation offloading where the complex requirements of multiple mobile services in workflows are fulfilled by offloading the services to the cloud.} However, the location of cloud servers far from end users often leads to unacceptable delay and heavy backhaul utilization. 
\par MEC empowers the provision of cloud-computing resources for end-users in close proximity to the network edge, which greatly improves the quality of experience (QoS) for end-users. Chen et al.~\cite{review-B2-1} considered the multi-user binary computing offloading in an MEC-enabled network with multi-channel radio interference, where a game-theoretic scheme is developed to achieve efficient task offloading in a distributed manner. Hekmati et al.~\cite{review-B2-2} focused on an MEC scenario with hard-deadline constrained tasks, and a framework of joint remote and local concurrent execution is proposed to ensure the hard task completion deadlines of tasks. Goudarzi et al.~\cite{review-B2-3} explored the application placement in computing environments with multiple centralized cloud servers and distributed fog/edge servers. Specifically, a weighted cost model is proposed to improve the execution time and energy consumption of devices. 
\par As an extension of MEC, the concept of fog computing has been proposed to include devices with idle computing resources, such as smartphones, set-top boxes, etc~\cite{review-B3-1}. With the explosive growth of vehicles and the rapid development of vehicle network techniques~\cite{GuoZLZ22}, VFC has become a research hotspot in both academia and industry. For example, Feng et al. \cite{review-B3-2} proposed a framework to extend the computing capabilities of vehicles in a distributed manner by utilizing the idle computing resources of other vehicles. Zhu et al.~\cite{review-B3-3} are devoted to a solution for latency and quality-balanced task allocation in the VFC-assisted network. {\color{color} Specifically, a joint optimization problem is formulated for the task assignment between mobile and stationary fog nodes, which is solved by mixed integer linear programming.}
%
%
\subsection{Resource Allocation}
\label{subsec:Resource Allocation}
\par Considering the limited communication and computing resources of the edge servers for MEC technology, the resource allocation for task offloading has received widespread attention. There are some works that only optimize radio resource allocation or computing resource allocation. For example, You et al.~\cite{You2017} considered wireless resource allocation based on orthogonal frequency-division multiple access (OFDMA) and time-division multiple access (TDMA) respectively to minimize energy consumption under delay constraints. {\color{color} Yang et al.~\cite{Yang2019} proposed an iterative algorithm to optimize resource allocation based on non-orthogonal multiple access (NOMA), where the total energy consumption and completion delay of tasks are significantly improved. Considering the limited computing and storage capabilities of a single-edge server, Ning et al.~\cite{Ning2021} studied multi-server cooperation to maximize the system utility. To improve the efficiency of resource utilization in an MEC system. {\color{b} Chen et al.~\cite{Chen2022} proposed a two-level adaptive resource allocation framework to support vehicular safety message transmissions.}}

\par There are some works that focus on joint optimization of communication and computing resource allocation~\cite{Liu2021,JosiloD19,ZhangGHCZT20,guo2021inter}. In~\cite{Liu2021}, the authors studied NOMA-enabled massive IoT networks to jointly optimize radio and computation resource allocation with the aim of maximizing the energy efficiency for task offloading. In~\cite{JosiloD19}, a Stackelberg game is proposed to model the interaction between the allocation of wireless and computing resource allocation from the edge cloud operator and the task offloading of users. In~\cite{ZhangGHCZT20}, the authors jointly optimized the task offloading decision, wireless resource allocation, and computation resource scheduling in an MEC-enabled dense cloud radio access network (C-RAN), and the Lyapunov optimization theory is developed for online decision-making. {\color{b}In~\cite{guo2021inter}, the authors considered the huge demands for network access and computing resources in ultra-dense edge computing and proposed an inter-server collaborative federated learning method to reduce both training and communication time. }
\par However, due to the unpredictable computing or communication demand, unstructured networks, and stringent QoS requirements of rescue missions, few of the above works study the problem of task offloading and resource allocation in multi-UAV-assisted post-disaster rescue scenarios. {\color{b}Distinguished from the previous studies above, this work proposes a three-layer post-disaster rescue computing architecture, which combines with UAV-enabled MEC and vehicle-enabled VFC. Moreover, we study the problem of task offloading and resource allocation based on the three-tier architecture to jointly exploit heterogeneous computing resources of MEC and VFC.}

 
%
%
\section{System Architecture and Problem Formulation}
\label{sec:System Model and Problem Formulation}
\par In this section, we first propose the hierarchical computing architecture for multi-UAV-assisted post-disaster rescue, and illustrate the related system models. Then, we formulate the joint optimization problem studied in this work.
%
%
\subsection{System Architecture}
\label{subsec:System Model}

\subsubsection{System Overview}
\label{subsec:System Overview}

\par As shown in Fig. \ref{fig_gameModel}, in the spatial domain, we consider a three-layer MEC-VFC-assisted post-disaster rescue computing architecture consisting of a vehicle fog layer, a UAV client layer, and a UAV edge layer. {\color{b} The architecture consists of three types of entities, i.e., rescue vehicles, small rotary-wing UAVs, and a large rotary-wing UAV. Among these entities, the small rotary-wing UAVs are the served entities, while the rescue vehicles and large rotary-wing UAV are the serving entities, which are detailed as follows.}


\par \textit{At the vehicle fog layer}, rescue vehicles are deployed in the disaster area for post-disaster recovery and reconstruction. Furthermore, these rescue vehicles can act as fog nodes to share idle computing resources with UAVs at the UAV client layer to alleviate the possible overload of the edge UAV at the UAV edge layer. 

\par \textit{At the UAV client layer}, the small rotary-wing UAVs are deployed as client UAVs (C-UAVs) to assist in aerial rescue missions according to the pre-set trajectories, which generate computation-intensive and latency-sensitive computing tasks. Furthermore, each C-UAV is responsible for a given area without overlapping the service area of the adjacent C-UAVs~\cite{review-A8}. Besides, each C-UAV flies at a fixed altitude $H$ to avoid the energy consumption caused by frequent aircraft ascending and descending due to obstacles~\cite{systemmodel-H}. Besides, due to the limited computing resources and battery capacity, each C-UAV independently decides whether to process the tasks locally (referred as local computing), offload the tasks to the edge UAV (referred as MEC-assisted offloading), or offload the tasks to vehicle fog nodes (referred as VFC-assisted offloading).

\par \textit{At the UAV edge layer}, a large rotary-wing UAV equipped with the MEC capability is deployed as an edge UAV (E-UAV) at the center of the disaster area with the following functionalities: 1) providing wireless communication coverage for the C-UAVs and rescue vehicles; 2) providing computation resources for the C-UAVs; 3) informing the ground control center of the on-site information on the disaster area; and 4) acting as a regional controller to make decisions by running algorithm via the collected channel state information (CSI) and state information of rescue vehicles and C-UAVs. 
\begin{figure*}[!hbt] 
	\centering
	\includegraphics[width =6.5in]{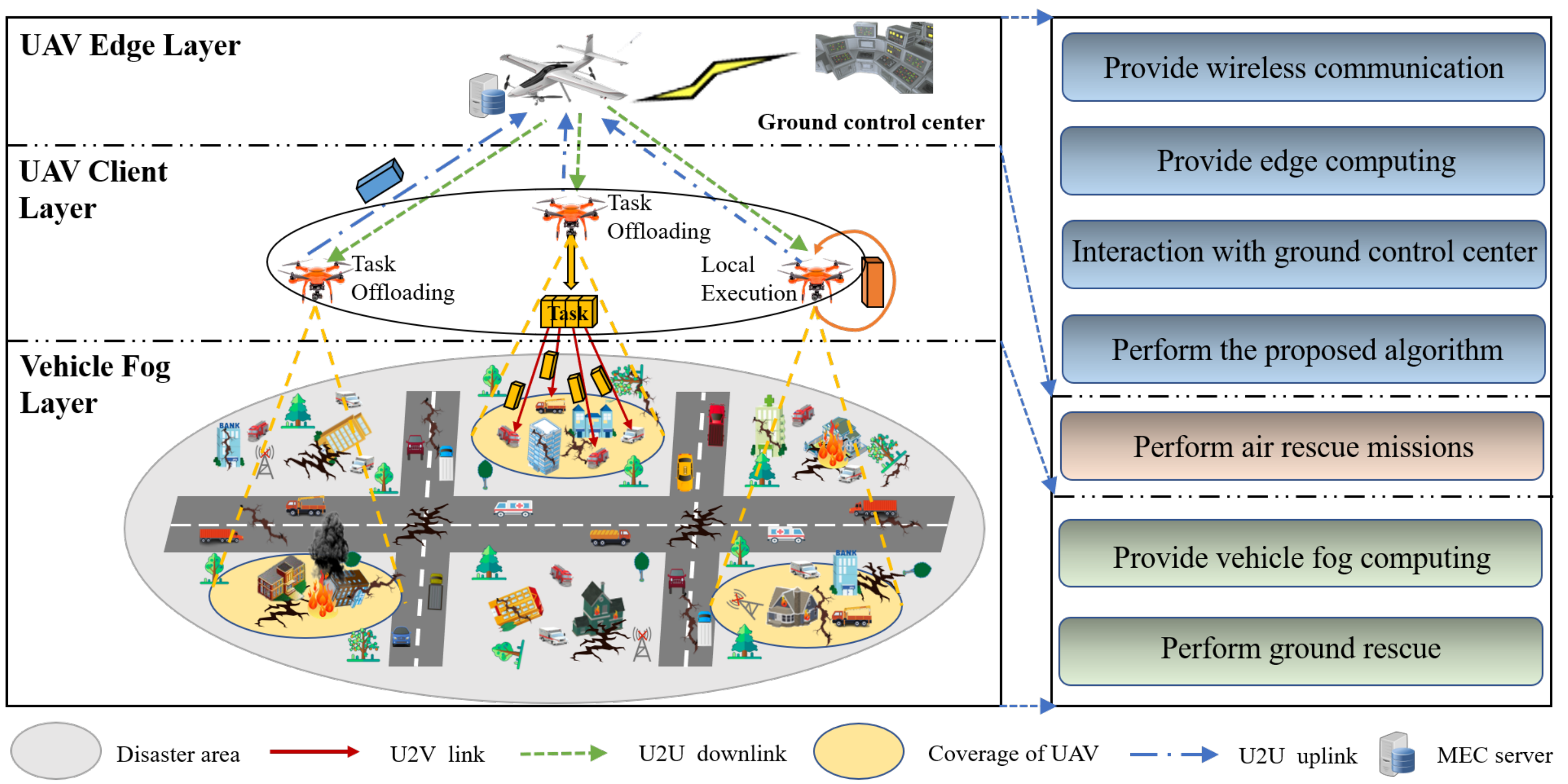}
	\caption{The MEC-VFC-aided aerial-terrestrial UAV network consists of a large rotary-wing UAV, a group of small rotary-wing UAVs, and multiple rescue vehicles. Each small rotary-wing UAV either computes its task locally or offloads the task to the larger UAV or divides the task into multiple sub-tasks and offloads them to rescue vehicles.}
	\label{fig_gameModel}
\end{figure*}
\par In the temporal domain, the system timeline is discretized into equal $T$ time slots~\cite{wang2023service}, i.e., $t\in \mathcal{T}=\{0,\ldots,T-1\}$, wherein each slot duration is $\Delta t$. In each time slot, the CSI and the state information of both rescue vehicles and C-UAVs are captured and updated by the E-UAV, where the corresponding decisions are determined by running our algorithm.

\begin{remark}
\par Note that our current work does not address the optimization of UAV trajectories. The main reason is that the trajectory planning of UAVs for post-disaster rescue relies on specific rescue missions and the terrain of post-disaster scenarios, which is independent of the task offloading and resource allocation problems in this work.
\end{remark}
\subsubsection{Basic Models}

\par The basic models of entities in the system are shown as follows. 

\textbf{\textit{Vehicle Fog Model.}} The set of rescue vehicles is denoted as $\mathcal{M} = \{1, \ldots, M\}$. Each vehicle $m\in\mathcal{M}$ is characterized by $\mathbf{St}_m^{\text{veh}}(t)=\left(\mathbf{P}_m(t),v_m(t),\theta_m(t),f_m^{\text{veh}}(t) \right)$, where $\mathbf{P}_m(t)=\left[x_m(t),y_m(t),0\right]$, $v_m(t)$, $\theta_m(t)$ and $f_m^{\text{veh}}(t)$ denote the position, velocity, direction, and idle computing resources of vehicle $m$ at time $t$, respectively. {\color{b} We consider that the vehicles are distributed in the disaster area following a Poisson point process (PPP) with density $\rho_v$. Moreover, by using the Gauss-Markov model~\cite{LiangH99},
the mobility of the vehicles is modeled as a temporal-dependent process, which is given as follows:}
\begin{equation}
\begin{split}
    &v_m(t+1)=\alpha v_m(t)+(1-\alpha)\overline{v}+\sqrt{1-\alpha^2}\omega_t^v,\\
    &V_{\text{veh}}^{\text{min}}\leq v_m(t)\leq V_{\text{veh}}^{\text{max}},\ \forall m\in \mathcal{M},\ \forall t\in \mathcal{T},
    \label{eq.vehcle_velocity}
\end{split}
\end{equation}
\noindent where $v_m(t+1)$ is the velocity of vehicle $m$ at time $t+1$ and {\color{color} $\omega_t^v$ is the uncorrelated random Gaussian process with mean $0$ and the asymptotic variance of velocity $\sigma_v^2$. Furthermore, $\alpha$ and $\overline{v}$ denote the memory degree and asymptotic mean of velocity, respectively.} {\color{color1} Similarly, direction $\theta_m$ can be given as:}
\begin{equation}
\begin{split}
    &\theta_m(t+1)=\alpha \theta_m(t)+(1-\alpha)\overline{\theta}+\sqrt{1-\alpha^2}\omega_t^d,\\
    &\Theta_{\text{veh}}^{\text{min}}\leq \theta_m(t)\leq \Theta_{\text{veh}}^{\text{max}},\ \forall m\in \mathcal{M},\ \forall t\in \mathcal{T},
    \label{eq.vehcle_direction}
\end{split}
\end{equation}
\noindent where $\theta_m(t+1)$ is the direction at time $t+1$ and $\omega_t^d$ is the uncorrelated random Gaussian process with mean $0$ and the asymptotic variance of direction $\sigma_d^2$. Furthermore, $\overline{\theta}$ represents the asymptotic mean of direction. Therefore, the mobility of vehicle $m$ can be updated as:
\begin{equation}
\begin{split}
    &x_m(t+1)=x_m(t)+v_m(t)\cdot \cos(\theta_m(t))\cdot \Delta t,\\
    &y_m(t+1)=y_m(t)+v_m(t)\cdot \sin(\theta_m(t))\cdot \Delta t.
    \label{eq.vehcle_mobility}
\end{split}
\end{equation}

\begin{remark}
\par {\color{b} Among the existing models, modeling vehicle mobility using PPP and Gaussian Markov models is more realistic in post-disaster scenarios. \textbf{\textit{First}}, due to the random nature of rescue point distribution and the constraint on vehicle movement caused by the random road damage resulting from disasters, the distribution of vehicles exhibits a certain degree of randomness. PPP is a commonly used model to characterize the distribution characteristics of rescue vehicles in disaster scenarios~\cite{review-A1,HouraniSJ16}. \textbf{\textit{Second}}, the rescue vehicles deployed to perform ground rescue missions usually travel toward a destination, and therefore the rescue vehicle’s location and velocity in the future are likely to be correlated with its current location and velocity. The Gaussian Markov mobility model is proposed to capture the essence of temporal-dependent process~\cite{LiangH99}. Thus, we adopt PPP and Gauss-Markov models to describe vehicle mobility, which provides a useful balance between realism and tractability.}
\end{remark}

\par \textbf{\textit{C-UAV Model.}} The set of C-UAVs is denoted as $\mathcal{N} = \{1, \ldots, N\}$. Each C-UAV $n\in \mathcal{N}$ is characterized by $\mathbf{St}_n^{\text{uav}}(t)=\left(\mathbf{P}_n(t),v_n(t),\theta_n(t),g_n(t),\mathbf{\Phi}_n(t),f_n^{\text{uav}}\right)$, where $\mathbf{P}_n(t)=\left[x_n(t),y_n(t), H\right]$, $v_n(t)$, and $\theta_n(t)$ respectively denote the position, velocity, and direction of C-UAV $n$ at time $t$, which are known according to the pre-set trajectory. Moreover, we assume that each C-UAV can generate multiple tasks within the system timeline and at most one task in each time slot~\cite{task-arrival}. {\color{b} Specifically, the computing tasks generated by C-UAVs are modeled as an independent and identically distributed Bernoulli process~\cite{ChenWCLZBLJ22,duan2022moto}. For each C-UAV $n$, a computing task is generated with probability $\rho_n\in [0,1]$ at the beginning of each slot. Moreover, $g_n(t)\in \{0,1\}$ is a binary variable to indicate whether C-UAV $n$ generates a task at time $t$, where $g_n(t)=1$ means that C-UAV $n$ generates a task. Then, $\mathbb{P}(g_n(t)=1)=1-\mathbb{P}(g_n(t)=0)=\rho_n$, where $\mathbb{P}(.)$ denotes the probability of an event occurring.} $\mathbf{\Phi}_n(t)=\{D_n(t),\eta_n(t),T^{\text{max}}_n(t)\}$ represents the computing task generated by C-UAV $n$ at time $t$, wherein $D_n(t)$ presents the data size of the input task (in bits), $\eta_n(t)$ is the computation intensity of the task (in cycles/bit), and $T_n^{\text{max}}(t)$ denotes the maximum acceptable delay of the task. The local computation capability of C-UAV $n$ is denoted as $f_n^{\text{uav}}$. In addition, we define a binary variable $a_n^i(t)\in \{0,1\}$ ($i\in \mathcal{I}=\{\text{loc},\text{mec},\text{veh}\}$) to represent the offloading decision of C-UAV $n$ at time $t$, wherein $a_n^{\text{loc}}(t)=1$ implies the task is executed locally on C-UAV $n$, $a_n^{\text{mec}}(t)=1$ implies the task is executed on the E-UAV, $a_n^{\text{veh}}(t)=1$ implies the task is executed on vehicle fog nodes, and $a_n^{\text{loc}}(t)+a_n^{\text{mec}}(t)+a_n^{\text{veh}}(t) = 1$, respectively.

\par \textbf{\textit{E-UAV Model.}} The E-UAV $u$ hovering over the disaster area is characterized by $\mathbf{St}^{u}=(\mathbf{P}_{u},F_{u}^{\text{max}})$, wherein $\mathbf{P}_{u}=[x_u,y_u,H_u]$ and $F_{u}^{\text{max}}$ denote the position and the maximal computing resources of the E-UAV, respectively. 

%
%
\subsection{Communication Model}
\label{subsec:Communication Model}

\par The C-UAVs can decide to offload the tasks to vehicle fog nodes or E-UAV via UAV-to-vehicle (U2V) links and UAV-to-UAV (U2U) links, respectively, and the widely used OFDMA is employed in the communication models. Specifically, for each C-UAV $n$, there are $K_n$ orthogonal wireless sub-channels~\cite{sub-channel}. {\color{b} Furthermore, we assume that each C-UAV is equipped with a directional antenna of adjustable beamwidth and the azimuth and elevation half-power beamwidths of the antenna are equal, which is presented by $2\Psi\in (0,\pi)$~\cite{antenna}}. Therefore, the antenna gain of C-UAV $n$ in the direction with azimuth $\psi^a$ and elevation $\psi^e$ can be obtained as~\cite{commodel-gain}:
\begin{equation}
    G(\psi^a,\psi^e) = 
    \begin{cases}
        \frac{G_0}{\Psi^2}, &-\Psi<\psi^a<\Psi,-\Psi<\psi^e<\Psi\\
        g\approx0, &\text{otherwise},
    \end{cases}
    \label{eq.gain}
\end{equation}
\noindent where $g$ denotes the channel gain outside the beamwidth of the antenna. {\color{b} In practice, $0<g\ll G_0/\Psi^2$~\cite{LyuZ19}, which means that the communication links outside the beamwidth of the antenna is difficult to meet the communication requirements. Therefore, for simplicity, we set $g=0$ to indicate that we do not consider communication links outside the beamwidth of the antenna.}

\par \textbf{\textit{U2V Communication.}} This work employs a probabilistic LoS channel model for the communication between C-UAVs and vehicles \cite{communicationmodel2}. The channel coefficient between C-UAV $n$ and vehicle $m$ at time $t$ can be presented as follows~\cite{UAV-review}:
\begin{equation}
	h_{n,m}(t)=\sqrt{\beta_{n,m}(t)}\Tilde h_{n,m}(t),
	\label{eq-h}
\end{equation}
\noindent where $\tilde h_{n,m}(t)$ represents the coefficient of small-scale fading that is generally a complex random variable with $E[|\tilde h_{n,m}(t)|^2]=1$, and $\beta_{n,m}(t)$ denotes the coefficient of large-scale fading that includes the distance-dependent path loss and shadowing. For the U2V links, the large-scale fading is generally modeled as a random variable that depends on the occurrence probabilities of LoS and non-line-of-sight (NLoS) links, which is given as~\cite{communicationmodel1}:
\begin{equation}
	\beta_{n,m}(t)=
	\begin{cases}
		\beta_0d_{n,m}^{-\mu}(t), &\text{LoS link},\\
		\kappa\beta_0d_{n,m}^{-\mu}(t), &\text{NLoS link},
	\end{cases}
	\label{eq.LoS-and-nLoS}
\end{equation}
\noindent where $\beta_0$ denotes the constant path loss coefficient at the reference distance of 1 m under the LoS condition, $d_{n,m}(t)=\|\mathbf{P}_{n}(t)-\mathbf{P}_{m}(t)\|$ denotes the straightline distance between C-UAV $n$ and vehicle $m$ at time $t$, $\mu$ is the path loss exponent, and $\kappa<1$ is the additional attenuation factor due to the NLoS propagation. 

\par The LoS probability $P_{n,m}^{\text{LoS}}(t)$ between C-UAV $n$ and vehicle $m$ is generally modeled as a logistic function of the elevation angle $\theta_{n,m}(t)$, which is given as ~\cite{communicationmodel5}:
\begin{equation}
	P_{n,m}^{\text{LoS}}(t)=\frac{1}{1+a \exp(-b(\theta_{n,m}(t) - a))},\label{eq.P_LoS}
\end{equation}
\noindent where $a$ and $b$ are constants that depend on the propagation environment, and $\theta_{n,m}(t) =\frac{180}{\pi
} \arcsin{\frac{H}{d_{n,m}(t)}}$ denote the elevation angle in degree. Therefore, the expected channel power gain can be given as:
\begin{sequation}
\label{eq}
    E[|h_{n,m}(t)|^2]= P_{n,m}^{\text{LoS}}(t)\beta_0d_{n,m}^{-\mu}(t)+(1-P_{n,m}^{\text{LoS}}(t))\kappa \beta_0d_{n,m}^{-\mu}(t).
\end{sequation}

\noindent Furthermore, we assume that the change of the LoS probability between C-UAV $n$ and vehicle $m$ within a time slot can be negligible because the time slot duration is set small enough~\cite{parameter_2}. Then, the average communication rate between C-UAV $n$ and vehicle $m$ at time $t$ is described as follows:
\begin{equation}
    R_{n,m}(t)=B\log_2\left(1+\frac{P_{n,m}E[|h_{n,m}(t)|^2]G_0}{\Psi^2\sigma^2B}\right), \label{eq.transratio1}
\end{equation}
\noindent where $B$ denotes the bandwidth of the sub-channel, {\color{b} $P_{n,m}$ represents the transmission power between C-UAV $n$ and vehicle $m$}, and $\sigma^2$ is the noise power spectral density.

\par \textbf{\textit{U2U Communication.}} The U2U communication is characterized by the free-space path-loss model since it is dominated by LoS links. The average communication rate between C-UAV $n$ and E-UAV $u$ is given as follows:
{\color{b}
\begin{equation}
    R_{n,u}(t)=K_nB\log_2\left(1+\frac{P_{n,u}\tilde{\beta}_0G_0 d_{n,u}^{-2}(t)}{\Psi^2\sigma^2K_nB}\right), \label{eq.transratio}\\
\end{equation}}
\noindent where $P_{n,u}$ is the transmission power from C-UAV $n$ to E-UAV $u$, {\color{b}$\tilde{\beta}_0$ is the channel power gain at the reference distance}, and $d_{n,u}(t)=\|\mathbf{P}_n(t)-\mathbf{P}_u(t)\|$ is the distance between C-UAV $n$ and E-UAV $u$ at time $t$.

\subsection{Service Delay and Energy Consumption}
\label{subsec:Computation Model}

\par The service delay and energy consumption to complete task $\mathbf{\Phi}_n(t)$ depend on the offloading strategy $a_n^i(t)$ of C-UAV $n$.

\par \textbf{\textit{Local Computing.}} When task $\mathbf{\Phi}_n(t)$ is executed on C-UAV $n$ locally (i.e., $a_n^{\text{loc}}(t)=1$), the local service delay at time $t$ can be calculated as:
\begin{equation}
	T_n^{\text{loc}}(t)=\frac{\eta_n(t)D_n(t)}{f_n^{\text{uav}}}. \label{eq.loc-computation}
\end{equation}
\par Correspondingly, the energy consumption of C-UAV $n$ to execute task $\mathbf{\Phi}_n(t)$ locally at time $t$ can be calculated as~\cite{UAV-H}:
\begin{equation}
	E_n^{\text{loc}}(t)=k(f_n^{\text{uav}})^{3}T_n^{\text{loc}}(t),\label{eq.loc-energy-consumption}
\end{equation}
\noindent where $k$ is the effective switched capacitance cofficient for each C-UAV that depends on the hardware architecture~\cite{CPU-Power}.

\textbf{\textit{MEC-Assisted Offloading.}} When task $\mathbf{\Phi}_n(t)$ is offloaded to the E-UAV for execution (i.e., $a_n^{\text{mec}}(t)=1$), the service delay of task at time $t$ includes the transmission delay and the E-UAV execution delay, which can be given as:
\begin{equation}
\begin{split}
    \label{eq.mec-computation}
    T_n^{\text{mec}}(t) = \frac{D_n(t)}{R_{n,u}(t)}+\frac{\eta_n(t) D_n(t)}{F_n(t)},
\end{split}
\end{equation}
\noindent where $F_n(t)$ denotes the computing resource allocated by the E-UAV to task $\mathbf{\Phi}_n(t)$.

\par The energy consumption of C-UAV $n$ to offload task $\mathbf{\Phi}_n(t)$ to the E-UAV is mainly induced by the task transmission, which can be given as~\cite{10106022}: 
\begin{equation}
	E_n^{\text{mec}}(t)=\frac{P_{n,u} D_n(t)}{R_{n,u}(t)}.\label{eq.mec-energy-consumption}
\end{equation}
\par \textbf{\textit{VFC-Assisted Offloading.}} When task $\mathbf{\Phi}_n(t)$ is offloaded to vehicle fog nodes for execution (i.e., $a_n^{\text{veh}}(t)=1$), we consider that task $\mathbf{\Phi}_n(t)$ $(n\in \mathcal{N})$ can be divided into multiple independent sub-tasks owing to the insufficient computing resources of vehicle fog nodes \cite{systemmodel-3}, and the time for task division is short enough to be negligible~\cite{systemmodel-taskdivide}. Furthermore, these sub-tasks can be offloaded by C-UAV $n$ to the set of rescue vehicles within its communication range (i.e., $d_{n,m}(t)\leq H\tan\Psi$) for parallel processing. Due to the limited number of sub-channels, C-UAV $n$ can offload sub-tasks to $K_n$ vehicles at most simultaneously. Therefore, we define $\mathbf{S}_n^{\prime}(t)$ as the set of vehicles selected by C-UAV $n$ to perform sub-tasks and $\mathbf{\lambda}_n^t=\{\lambda_{n,j}^t\}_{j\in \mathbf{S}_n^{\prime}(t)}$ as the task division set of C-UAV $n$ at time $t$, wherein $\lambda_{n,j}^t$ is the proportion of sub-task offloaded to vehicle $j$ in the total task, $|\mathbf{S}_n^{\prime}(t)|\leq K_n$, and $\lambda_{n,j}^t\in [0,1]$. Therefore, the service delay of task at time $t$, including the transmission delay and the vehicle execution delay, can be calculated as:
\begin{equation}
	T_n^{\text{veh}}(t)=\max\limits_{j\in \mathbf{S}_n^{\prime}(t)}\left(\frac{\lambda_{n,j}^t D_n(t)}{R_{n,j}(t)}+\frac{\lambda_{n,j}^t \eta_n(t)D_n(t)}{f_j^{\text{veh}}(t)}\right),\label{eq.veh-delay}
\end{equation}
\noindent where $\mathbf{S}_n(t)\ (\mathbf{S}_n^{\prime}(t)\subset \mathbf{S}_n(t))$ is the set of vehicles within C-UAV $n$ communication range at time $t$, $\sum_{j\in \mathbf{S}_n^{\prime}(t)}\lambda_{n,j}^t=1$ and $f_j^{\text{veh}}(t)$ denotes the idle computing resources owned by vehicle $j$ at time $t$.

\par The energy consumption of C-UAV $n$ to offload task $\mathbf{\Phi}_n(t)$ to vehicles is mainly induced by the task transmission, which can be given as: 
\begin{equation}
\begin{split}
    \label{eq.veh-energy-consumption}
    E_n^{\text{veh}}(t)&=\sum_{j\in \mathbf{S}_n^{\prime}(t)}P_{n,j}T_{n,j}^{\text{veh}}(t)=\sum_{j\in \mathbf{S}_n^{\prime}(t)}\frac{P_{n,j}\lambda_{n,j}^t D_n(t)}{R_{n,j}(t)}.
\end{split}
\end{equation}
\begin{remark}
\par When calculating the energy consumption of C-UAVs, the propulsion energy consumption of C-UAVs is omitted. This is because the C-UAVs fly according to the pre-set trajectories, leading to constant propulsion energy consumption, which would have no effect on the results of decision-making for C-UAVs.
\end{remark}
%
%
\subsection{Utility Function}
\label{subsec:Utility Function}

\par In this sub-section, the utility function is formulated to quantify the satisfaction level of C-UAVs in performing tasks, which can be formulated by considering the following metrics.

\textbf{\textit{Revenue of Task Processing.}} In post-disaster rescue scenarios, the completion delay of tasks could greatly affect the satisfaction of C-UAVs. Similar to~\cite{review-A6,computationmodel2}, a convex logarithmic function is employed to quantify the satisfaction of C-UAVs on task completion. Therefore, the revenue obtained by C-UAV $n$ can be calculated as:
 \begin{equation}
    \label{eq-benefit}
     Bt_n(t) = \log(\beta +T_n^{\text{max}}(t)-T_n(t)),
 \end{equation}
 \noindent where {\color{color} $\beta$ is a constant with a positive value that ensures the revenue function non-negative} and $T_n(t)$ is the completion delay of task $\mathbf{\Phi}_n(t)$.

\textbf{\textit{Cost of Energy Consumption.}} Considering the limited battery capacity of the C-UAVs, the cost of C-UAV $n$ is modeled as the energy consumption, which is given as:
\begin{equation}
    \label{eq-cost-energy}
    Ct_n^{\text{E}}(t) = E_n(t).
\end{equation}
\par \textbf{\textit{Cost of Computation Resource.}} In the MEC-VFC-aided aerial-terrestrial UAV network, the C-UAVs share the computation resources of E-UAV. However, the limited computing resources of the E-UAV and the stringent demands of C-UAVs could lead to resource competition among the C-UAVs and rapid resource depletion of the E-UAV. To ensure the effective utilization and sustainability of resources, the price-based mechanism is introduced to model the cost of using the E-UAV computation resources. Similar to the existing work~\cite{charge1,charge2,charge3}, the cost that C-UAV $n$ pays for the computation resources of E-UAV is given as:
\begin{equation}
    \label{eq-cost-charge}
    Ct_n^{\text{mec}}(t) = \rho_0 F_n(t),
\end{equation}
\noindent where $\rho_0$ represents the unit price of computing resources for E-UAV.

\par According to the above metrics, we finally design the utility function of C-UAV $n$ as follows:
\begin{equation}
U_n^i(t)= 
\begin{cases}U_n^{\text{loc}}(t), & a_n^{\text{loc}}(t)=1\\ 
U_n^{\text{veh}}(t), & a_n^{\text{veh}}(t)=1, \\ 
U_n^{\text{mec}}(t), & a_n^{\text{mec}}(t)=1
\end{cases}
\end{equation}
\noindent where $U_n^{\text{loc}}(t)$ is the utility of local computing, $U_n^{\text{mec}}(t)$ is the utility of MEC-assisted offloading, and $U_n^{\text{veh}}(t)$ is the utility of VFC-assisted offloading, which are denoted respectively as:
\begin{sequation}
	\label{eq.utility}
	\begin{cases}
		U_n^{\text{loc}}(t)=\alpha_n \log(\beta+T_n^{\text{max}}(t)-T_n^{\text{loc}}(t)) - \beta_n E_n^{\text{loc}}(t)\\
	    U_n^{\text{veh}}(t)=\alpha_n \log(\beta+T_n^{\text{max}}(t)-T_n^{\text{veh}}(t)) - \beta_n E_n^{\text{veh}}(t). \\
            U_n^{\text{mec}}(t)=\alpha_n \log(\beta+T_n^{\text{max}}(t)-T_n^{\text{mec}}(t)) - \beta_n E_n^{\text{mec}}(t)\\
            \qquad \qquad - \rho_0 F_n(t)
	\end{cases}
\end{sequation}

\noindent Moreover $\alpha_n$ and $\beta_n$ denote the coefficients of task completion delay and energy consumption, respectively, and $\alpha_n + \beta_n = 1$.

%
%
\subsection{Problem Formulation}
\label{subsec:Problem Formulation}

\par This work aims to maximize the time-average system utility by jointly optimizing the task offloading decisions $\mathcal{A}=\{\mathcal{A}^t|\mathcal{A}^t=\{a_n^i(t)\}_{n\in\mathcal{N},i\in\mathcal{I}}\}_{t\in\mathcal{T}}$, MEC computing resource allocation $\mathcal{F}=\{\mathcal{F}^t|\mathcal{F}^t=\{F_n(t)\}_{n\in\mathbf{N_0}}\}_{t\in\mathcal{T}}$, and VFC resource allocation including vehicle fog node selection $\mathcal{S}=\{\mathcal{S}^t|\mathcal{S}^t=\{\mathbf{S}_n^{'}(t)\}_{n\in\mathbf{N_1}}\}_{t\in\mathcal{T}}$ and task division $\mathbf{\Lambda}=\{\mathbf{\lambda}^t|\mathbf{\lambda}^t=\{\lambda_n^t\}_{n\in\mathbf{N_1}}\}_{t\in\mathcal{T}}$, where $\mathbf{N}_0$ and $\mathbf{N}_1$ denote the set of C-UAVs that choose MEC and VFC at time $t$, respectively. Accordingly, the JTRAOP can be formulated as follows:
\begin{align}
    \textbf{P}:\ &\underset{\mathcal{A},\mathcal{F},\mathcal{S},\mathbf{\Lambda}}{\text{max}} \frac{1}{T}\sum_{t=0}^{T-1}\sum_{i\in \mathcal{I}}\sum_{n\in \mathcal{N}}g_n(t) a_n^i(t) U_n^i(t) \label{P}\\
    \text{s.t.}  \ &a_n^i(t)=\{0,1\}, \forall n\in \mathcal{N}, \forall i\in \mathcal{I}, \forall t\in \mathcal{T} \tag{\ref{P}{\text{a}}} \label{Pa}\\
    \ &\sum_{i\in \mathcal{I}}a_n^i(t)=1,\forall n\in \mathcal{N},\forall t\in \mathcal{T}\tag{\ref{P}{\text{b}}} \label{Pb}\\
        \ &0\leq F_n(t)\leq F_{u}^{\text{max}}, \forall n\in \mathbf{N}_0,\forall t\in \mathcal{T}\tag{\ref{P}{\text{c}}} \label{Pc}\\
        \ &\sum_{n\in \mathbf{N}_0}F_n(t)\leq F_{u}^{\text{max}},\forall t\in \mathcal{T}\tag{\ref{P}{\text{d}}} \label{Pd}\\
        \ &\lambda_{n,j}^t\in [0,1],\forall n\in \mathbf{N}_1,\forall j\in \mathbf{S}^{'}_n(t), \forall t\in \mathcal{T}\tag{\ref{P}{\text{e}}} \label{Pe}\\
        \ &\sum_{j\in \mathbf{S}_n^{'}(t)}\lambda_{n,j}^t=1,\forall n\in \mathbf{N}_1,\forall t\in \mathcal{T}\tag{\ref{P}{\text{f}}} \label{Pf}\\
        \ &\mathbf{S}_n^{\prime}(t)\cap \mathbf{S}_j^{\prime}(t)=\emptyset,\forall n\neq j,\ n,j\in \mathbf{N_1},\forall t\in \mathcal{T},\tag{\ref{P}{\text{g}}} \label{Pg}\\
        \ &\mathbf{S}_n^{'}(t)\subset \mathbf{S}_n(t), \forall n\in \mathbf{N}_1,\forall t\in \mathcal{T}\tag{\ref{P}{\text{h}}} \label{Ph}\\
        \ &g_n(t)=\{0,1\},\forall n\in \mathcal{N},\forall t\in \mathcal{T}\tag{\ref{P}{\text{i}}} \label{Pi}\\
        \ &a_n^i(t)T_n^i(t)\leq T_n^{\text{max}}(t),\forall n\in \mathcal{N}, \forall i\in \mathcal{I}, \forall t\in \mathcal{T}\tag{\ref{P}{\text{j}}} \label{Pj}
\end{align}
\noindent Constraints (\ref{Pa}) and (\ref{Pb}) represent each C-UAV can only choose one offloading strategy. Constraints (\ref{Pc}) and (\ref{Pd}) indicate that the computation resources allocated by the E-UAV should be positive and not greater than the maximum resource owned by the E-UAV. Constraints (\ref{Pe}) and (\ref{Pf}) pose the conditions on task division when C-UAVs decide to offload the tasks to rescue vehicles for execution. Constraint (\ref{Pg}) represents that each vehicle fog node is selected to serve one C-UAV. Constraint (\ref{Ph}) ensures that the selected vehicle fog node should be within the communication range of the C-UAV. Constraint (\ref{Pi}) represents whether a C-UAV generates a computing task at time $t$. Moreover, Constraint (\ref{Pj}) means that the maximum acceptable delay should not be exceeded in completing the task.
\par Similar to~\cite{assume1}, we assume that the tasks generated by the C-UAVs can be completed within one time slot since the computing tasks of rescue missions are delay-sensitive. Therefore, the optimization problem $\textbf{P}$ can be reformulated as the real-time optimization problem $\textbf{P}^{\prime}$ that maximizes the system utility in each time slot, which is given as:
\begin{align}
    \textbf{P}^{\prime}:\ &\underset{\mathcal{A}^t,\mathcal{F}^t,\mathcal{S}^t,\lambda^t}{\text{max}} \sum_{i\in I}\sum_{n\in N}g_n(t) a_n^i(t) U_n^i(t) \label{opti1}\\
    \text{s.t.}  \ &(\ref{Pa})-(\ref{Pi}) \notag
\end{align}

\noindent where $\{\mathcal{A}^t,\mathcal{F}^t,\mathcal{S}^t,\lambda^t\}$ indicates the decisions of task offloading, MEC computation resource allocation, vehicle fog node selection, and task division at time slot $t$. The above problem $\textbf{P}^{'}$ contains both binary variables (i.e., task offloading decision $\mathcal{A}^t$ and vehicle fog node selection $\mathcal{S}^t$) and continuous variables (i.e., MEC computation resource allocation $\mathcal{F}^t$ and task division $\lambda^t$) is a mixed-integer non-linear programming (MINLP) problem, which is non-convex~\cite{boyd2004convex,Guo2019} and NP-hard~\cite{belotti2013mixed}. Therefore, a large amount of computational overhead caused by seeking the optimal solution may not be suitable for real-time decision making. To this end, we design an MVTORA approach that obtains a sub-optimal solution in polynomial time complexity. Furthermore, for the convenience of the following description, we drop the time index for variables similar to~\cite{Cui2023}.

%
%
\section{MEC-VFC-Aided Task Offloading and Resource Allocation Approach}
\label{sec:MVTORA}
\par To achieve the maximal system utility, the MVTORA approach is presented by separating problem $\textbf{P}^{'}$ into two parts, i.e., task offloading and computing resource allocation, which are solved respectively. First, the task offloading part seeks to optimize the task offloading decisions for C-UAVs, which is solved by adopting game theory. Furthermore, the resource allocation part aims to optimize the MEC and VFC resource allocation decisions for the E-UAV and vehicular fog nodes, respectively, which are solved by employing convex optimization and evolutionary computation, respectively. The task offloading and computing resource allocation are detailed in Sections \ref{subsec:Computation Offloading Decisions} and \ref{subsec:The Optimal MEC Computing Resource Allocation}, respectively. In addition, Section \ref{subsec:Complexity Analysis} presents the main steps and analysis of the MVTORA approach. {\color{b} Note that we employ a binary offloading strategy for MEC offloading and a partial offloading strategy for VFC offloading since the computing capability of the E-UAV is powerful while that of the ground rescue vehicles is relatively limited. A comprehensive explanation of this offloading decision is presented in Section \ref{sec:Discussion}.}

\subsection{Task Offloading}
\label{subsec:Computation Offloading Decisions}

\par The offloading decision of C-UAV $n$ depends not just on its own demand but also on the offloading decisions of the other C-UAVs. Considering the competitive nature of task offloading among C-UAVs, game theory is employed to solve the task offloading decision problem.

%
%
\subsubsection{Game Formulation} 
\label{subsubsec:Game Formulation}

\par The problem of task offloading decision is modeled as a task offloading game among multiple C-UAVs, which is defined as a triplet $\Gamma=\{\mathcal{N},\mathbb{A},(U_n)_{n\in \mathcal{N}}\}$, where the elements are detailed as follows:

\begin{itemize}
\item $\mathcal{N}=\{1,2,\dots,N\}$ denotes the players, i.e., all C-UAVs.
\item $\mathbb{A}=\mathbf{A}_1\times\dots\times\mathbf{A}_N$ denotes the strategy space, wherein $\mathbf{A}_n=\{a_n^{\text{loc}},a_n^{\text{mec}},a_n^{\text{veh}}\}$ is the set of offloading strategies for player $n\ (n\in \mathcal{N})$, $a_n\in\mathbf{A}_n$ denotes the strategy chosen by player $n$, and $\mathcal{A}=(a_1,\dots,a_N)\in \mathbb{A}$ is the strategy profile.
\item $(U_n)_{n\in \mathcal{N}}$ is the utility function of player $n$ that maps each strategy profile $\mathcal{A}$ to a real number, i.e., $U_n(\mathcal{A}): \mathbb{A} \mapsto \mathbb{R}$, where $\mathbb{R}$ is the set of real number.
\end{itemize}
\noindent Each C-UAV aims to maximize its utility by choosing an optimal offloading strategy. Thus, the problem of task offloading can be formulated as:
\begin{align}
    \underset{a_n}{\text{max}} & U_n(a_n,a_{-n})=a_n^{\text{loc}}U_n^{\text{loc}}+a_n^{\text{veh}}U_n^{\text{veh}}+a_n^{\text{mec}}U_n^{\text{mec}}\label{P1}\\
        \text{s.t.}  \ &a_n^{\text{loc}}+a_n^{\text{veh}}+a_n^{\text{mec}}=1,\forall n\in \mathcal{N},\tag{\ref{P1}{\text{a}}} \label{P1a}\\
        \ &a_n^i=\{0,1\}, \forall n\in \mathcal{N}, i\in \{\text{loc}, \text{mec}, \text{veh}\},\tag{\ref{P1}{\text{b}}} \label{P1b}
\end{align}
\noindent where $a_{-n}=(a_1,\dots,a_{n-1},a_{n+1},\dots,a_N)$ denotes the offloading decisions of the other players except player $n$.
%
%
\subsubsection{The Solution to Task Offloading Game}
\label{subsubsec:The Existence of Nash Equilibrium}
\par To determine the solution to the task offloading game, we first introduce the concept of Nash equilibrium, which describes a situation where no player has any incentive to unilaterally deviate from the current strategy.
\begin{definition}
\label{def:def1}
    The strategy profile $\mathcal{A}^*=(a_1^*,\dots,a_N^*)$ is a pure-strategy Nash equilibrium of game $\Gamma$ if and only if
    \begin{equation}
    U_n(a_n^*,a_{-n}^*)\geq U_n(a_n^{\prime},a_{-n}^*) \quad  \forall a_n^{\prime}\in \mathbf{A}_n,\forall n \in \mathcal{N}.
    \end{equation}
\end{definition}
\par {\color{color1} Furthermore, we introduce a powerful tool, known as exact potential games~\cite{potential}, to help us study the existence of Nash equilibrium and how to obtain a Nash equilibrium solution for the task offloading game.}
\begin{definition}
    \label{def:def2}
    {\color{color} A game can be called an exact potential game if and only if a potential function $F(\mathcal{A}): \mathbb{A} \mapsto \mathbb{R}$ exists such that} 
    \begin{equation}
    \label{PG-def}
    \begin{split}
        &U_n(a_n,a_{-n})-U_n(b_n,a_{-n})\\&= F(a_n,a_{-n})-F(b_n,a_{-n}), \forall n\in \mathcal{N}, \textcolor{color1}{}a_n,b_n\in \mathbf{A}_n,   
    \end{split}
    \end{equation}
\noindent where $F(\mathcal{A})$ accurately captures the utility change of a single player due to strategic deviation. 
\end{definition}

{\color{b}
\par Besides, we introduce how to obtain a Nash equilibrium solution of the exact potential game by presenting the concepts of the finite improvement path (FIP) and the better response update process.
}

\begin{definition}
    The exact potential game with finite strategy sets always has a Nash equilibrium and the FIP~\cite{potential}. 
    \label{def:def_FIP}
\end{definition}

{\color{b}
\begin{definition}
    In the better response update process, given the other players’ strategy $a_{-n}$, player $n$ will select a new strategy $T_n$ over the current strategy $a_n$ if and only if $T_n$ is any randomly selected strategy that improves his/her utility. We formally write it as 
    \begin{equation}
    \begin{aligned}
    &T_n=\operatorname{rand}\left(\left\{a_n^{\prime} \mid U_n\left(a_n^{\prime}, a_{-n}\right)>U_n\left(a_n, a_{-n}\right)\right\}\right),\\ 
    &\forall a_n^{\prime}\in \mathbf{A}_n, n \in \mathcal{N},
    \end{aligned}
     \end{equation}
    where $\operatorname{rand}(\{.\})$ denotes a randomized selection among elements of a set.
    \label{def:def_beteer_response}
\end{definition}}

{\color{b}
\par According to Definitions \ref{def:def_FIP} and \ref{def:def_beteer_response}, the FIP means that each player updates its current strategy in each iteration through the better response update process and after a finite number of iterations, the improvement path terminates and its end point corresponds to the Nash equilibrium solution~\cite{la2016potential}. Therefore, for an exact potential game, we can obtain the Nash equilibrium solution by the better response update process. 

\par Finally, we prove that the task offloading game among multiple C-UAVs is an exact potential game through the following Theorem \ref{lem:lem1}.}
\begin{theorem}
\par The task offloading game among multiple C-UAVs is an exact potential game where the potential function $F(\mathcal{A})$ is given as:
\begin{equation}
\label{potential-function}
\begin{aligned}
F(\mathcal{A}) 
& =a_n^{\mathrm{loc}} \sum_{j=1}^N\left(\alpha_j\log \left(\beta+T_j^{\mathrm {max}}-T_j^{\mathrm{loc}}\right) -\beta_jE_j^{\mathrm{loc}}\right)\\
& +\left(1-a_n^{\mathrm{loc}}\right)\times\left\{\alpha_n \log \left(\beta+T_n^{\mathrm{max}}-a_n^{\mathrm{mec}} T_n^{\mathrm{mec}}\right.\right.\\
&\left.-a_n^{\mathrm{veh}} T_n^{\mathrm{veh}}\right)\left.-\beta_n\left(a_n^{\mathrm {mec}} E_n^{\mathrm {mec}}+a_n^{\mathrm{veh}} E_n^{\mathrm{veh}}\right)-\rho_0 a_n^{\mathrm {mec}}F_n\right.\\
& +\sum_{j=1, j \neq n}^N\left(\alpha_j\log \left(\beta+T_j^{\mathrm {max}}-T_j^{\mathrm{loc}}\right)-\beta_jE_j^{\mathrm{loc}}\right)\}.
\end{aligned}
\end{equation}
\label{lem:lem1}
\end{theorem}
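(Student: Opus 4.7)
The aim is to verify the defining property of an exact potential game in Definition \ref{def:def2}: for every player $n \in \mathcal{N}$ and every pair of strategies $a_n, b_n \in \mathbf{A}_n$, the identity $U_n(a_n, a_{-n}) - U_n(b_n, a_{-n}) = F(a_n, a_{-n}) - F(b_n, a_{-n})$ must hold, where $F$ is given in (\ref{potential-function}). Since $\mathbf{A}_n = \{\text{loc}, \text{mec}, \text{veh}\}$ has only three elements, it suffices to verify the equality on the three unordered pairs of strategies that player $n$ can switch between.

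The plan is to exploit the piecewise structure of $F$ in the indicator $a_n^{\text{loc}}$. The final summation $\sum_{j \neq n}(\alpha_j\log(\beta + T_j^{\text{max}} - T_j^{\text{loc}}) - \beta_j E_j^{\text{loc}})$ in the $(1-a_n^{\text{loc}})$ branch is exactly what remains from the full sum $\sum_{j=1}^N(\cdot)$ in the $a_n^{\text{loc}}=1$ branch once the $j=n$ term is extracted. Hence all quantities indexed by $j \neq n$ are identical across the two branches and drop out of any difference $F(a_n, a_{-n}) - F(b_n, a_{-n})$. This reduces each verification to comparing only the player-$n$-specific contribution in $F$ against the corresponding utility $U_n^i$ in (\ref{eq.utility}).

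The three cases are then mechanical. Case (i): if $a_n = \text{loc}$ and $b_n \in \{\text{mec}, \text{veh}\}$, separating the $j=n$ term from the first branch isolates $\alpha_n \log(\beta + T_n^{\text{max}} - T_n^{\text{loc}}) - \beta_n E_n^{\text{loc}}$, which equals $U_n^{\text{loc}}$; the second branch, with $b_n^{\text{mec}}=1$ (or $b_n^{\text{veh}}=1$), collapses via the mutually exclusive indicators to $U_n^{\text{mec}}$ (or $U_n^{\text{veh}}$). Case (ii): if $\{a_n,b_n\}=\{\text{mec},\text{veh}\}$, both profiles satisfy $a_n^{\text{loc}}=0$, so both evaluations of $F$ take the second branch, the $\sum_{j\neq n}(\cdot)$ terms cancel, and the remaining player-$n$ block simplifies to $U_n^{\text{mec}} - U_n^{\text{veh}}$ by the same indicator selection. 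In each case the resulting expression agrees term-by-term with $U_n(a_n, a_{-n}) - U_n(b_n, a_{-n})$ obtained directly from (\ref{eq.utility}).

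The only real obstacle is the bookkeeping: the piecewise indicator $(1-a_n^{\text{loc}})$ must be kept coordinated with the mutually exclusive indicators $a_n^{\text{mec}}$ and $a_n^{\text{veh}}$ so that the logarithmic delay term $\log(\beta+T_n^{\max}-a_n^{\text{mec}}T_n^{\text{mec}}-a_n^{\text{veh}}T_n^{\text{veh}})$, the linear energy penalty $a_n^{\text{mec}}E_n^{\text{mec}}+a_n^{\text{veh}}E_n^{\text{veh}}$, and the pricing term $\rho_0 a_n^{\text{mec}}F_n$ each collapse to the correct single-strategy quantity for each candidate strategy. Once this expansion is handled cleanly, the identity holds for every player $n$ and every admissible strategy pair, which by Definition \ref{def:def2} establishes that $\Gamma$ is an exact potential game with potential $F$ as defined in (\ref{potential-function}).
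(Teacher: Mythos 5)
Your proposal is correct and follows what is essentially the only viable route, which is also the one the paper takes in its appendix: a direct verification of the defining identity in Definition~\ref{def:def2}, observing that in both branches the potential reduces to $U_n(a_n,a_{-n})+\sum_{j\neq n}U_j^{\mathrm{loc}}$, so the $j\neq n$ terms cancel in any unilateral deviation and the difference collapses to $U_n(a_n,a_{-n})-U_n(b_n,a_{-n})$ for each of the three strategy pairs. The case analysis and the indicator bookkeeping you describe are exactly the mechanical content of that verification.
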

\begin{proof}
{\color{b} The detailed proof is given in Appendix A of the supplemental material.}
\label{pro:pro1}
\end{proof}

\par The key idea of the task offloading game is to iteratively update the players' offloading strategies through the better response update process until the Nash equilibrium is reached, which is shown in Algorithm \ref{Algorithm 1}. The main steps of implementing the task offloading game are described as follows.
\textbf{i)} In each time slot, the E-UAV collects the state information of C-UAVs, the CSI of U2U channel, and the initial task offloading decision and corresponding utility of C-UAVs. \textbf{ii)} Each iteration is divided into $N$ decision slots (Lines 5$\sim$10). At each decision slot, one C-UAV is selected to attempt to update its offloading decision while the offloading decisions of other C-UAVs remain unchanged (Line 6). \textbf{iii)} If higher utility is achieved, the C-UAV's offloading decision is updated; otherwise the original offloading decision is maintained (Lines 7$\sim$10). \textbf{iv)} When no C-UAV changes its offloading decision, the task offloading decision game reaches the Nash equilibrium. \textbf{v)} The E-UAV sends the optimal task offloading decision information to each C-UAV. \textbf{vi)} The C-UAVs perform the actions of computation offloading according to the received decisions. 
\begin{algorithm}[]	
    \label{Algorithm 1}
    \SetAlgoLined
    \KwIn{The state information of C-UAVs $\{\mathbf{St_n^{\text{uav}}}\}_{n\in \mathcal{N}}$, the initial task offloading decision $\mathbf{A}^{\text{ini}}=\{a_n\}_{n\in\mathcal{N}}$ and corresponding utility $\mathbf{U}^{\text{ini}}=\{U_n\}_{n\in\mathcal{N}}$.}
    \KwOut{The optimal task offloading decision $\mathbf{A}^{*}=\{a_n^{*}\}_{n\in\mathcal{N}}$.}
    \textbf{ Initialization:} 
    Iteration $l=1$, $\mathbf{A}^0=\emptyset$\;
    $\mathbf{A}^l=\mathbf{A^{\text{ini}}}$\;
    \While{$\mathbf{A}^{l-1}\neq \mathbf{A}^l$}
    {
        $\mathbf{A}^{l-1}=\mathbf{A}^{l}$\;
        \For{$n\in \mathcal{N}$}
        {
            $\mathbf{A}^{l}(n)=a_n^{\text{mec}}=1$\;
            Call Algorithm \ref{Algorithm 2} for $F_n^{*}$ based on $\mathbf{A}^{l}$\;
            Calculate the utility $U_n^{\text{mec}}$ based on $F_n^{*}$ and Eq. (\ref{eq.utility})\;
            \If{$U_n^{\text{mec}}\leq\mathbf{U}^{\text{ini}}(n)$}
            {
                $\mathbf{A}^{l}(n)=\mathbf{A}^{\text{ini}}(n)$\;
            }
            
        }
        $l=l+1$\;
    }
    $\mathbf{A}^{*}=\mathbf{A}^{l}$\;
    \Return{$\mathbf{A}^{*}=\{a_n^{*}\}_{n\in\mathcal{N}}$.}
    \caption{Task Offloading Game}
\end{algorithm}	
%
%
\subsection{Resource Allocation}
\label{subsec:The Optimal MEC Computing Resource Allocation}

\par The problem of resource allocation is decomposed into the sub-problems of MEC resource allocation and VFC resource allocation, respectively, which aim to obtain the optimal resource allocation decisions for aerial E-UAV and terrestrial vehicle nodes, respectively. 

\subsubsection{MEC Resource Allocation} The MEC resource allocation problem $\textbf{P1}$ seeks to maximize the total utility of C-UAVs that offload tasks to the aerial E-UAV by optimizing the resource allocation of E-UAV, which is formulated as: 
\begin{align}
    \textbf{P1}: \ \underset{\mathcal{F}}{\text{max}} &\sum_{n\in \mathbf{N_0}}\left\{\alpha_n \log(\beta+T_n^{\text{max}}-T_n^{\text{mec}})-\beta_nE_n^{\text{mec}}-\rho_0F_n \right\}\label{P2}\\
    \text{s.t.}  \ &0\leq F_n\leq F_u^{\text{max}}, \forall n\in \mathbf{N_0},\tag{\ref{P2}{\text{a}}}\label{P2a}\\
        \ &\sum_{n\in \mathbf{N_0}}F_n\leq F_u^{\text{max}}. \tag{\ref{P2}{\text{b}}}\label{P2b}
\end{align}
\begin{lemma}
\label{lem:lem2}
Problem $\textbf{P1}$ is convex. 
\end{lemma}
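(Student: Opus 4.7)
The plan is to show that Problem \textbf{P1} is a maximization of a concave objective over a convex feasible set, which suffices for convexity. First I would note that the feasible region defined by (\ref{P2a}) and (\ref{P2b}) is the intersection of half-spaces in $\{F_n\}_{n\in \mathbf{N}_0}$, hence a convex polyhedron, so the feasibility side is immediate.

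Next I would argue the objective is concave in $\mathcal{F}$ term by term. Observe that $E_n^{\mathrm{mec}} = P_{n,u}D_n/R_{n,u}$ from (\ref{eq.mec-energy-consumption}) is independent of $F_n$, so $-\beta_n E_n^{\mathrm{mec}}$ contributes a constant. The term $-\rho_0 F_n$ is linear, hence concave. The non-trivial term is
\begin{equation*}
\alpha_n\log\bigl(\beta + T_n^{\mathrm{max}} - T_n^{\mathrm{mec}}\bigr) = \alpha_n\log\!\left(c_n - \tfrac{\eta_n D_n}{F_n}\right),
\end{equation*}
where $c_n := \beta + T_n^{\mathrm{max}} - D_n/R_{n,u}$ is a constant with respect to $F_n$. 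I would treat this as a composition: the inner function $F_n \mapsto c_n - \eta_n D_n/F_n$ is concave on $F_n > 0$ because $-1/F_n$ has second derivative $-2/F_n^3 < 0$, and $\eta_n D_n > 0$ preserves concavity under positive scaling. The outer function $\log(\cdot)$ is concave and non-decreasing on its domain, so by the standard composition rule (concave non-decreasing $\circ$ concave $=$ concave), the log term is concave in $F_n$. Since $F_n$ for different $n$ appear separably, no mixed-variable issue arises and the sum over $n\in \mathbf{N}_0$ preserves concavity.

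The main obstacle I anticipate is justifying that the log argument stays in its domain, i.e.\ $c_n - \eta_n D_n/F_n > 0$, so that the composition rule applies throughout the feasible region. I would invoke the deadline constraint (\ref{Pj}), which enforces $T_n^{\mathrm{mec}} \leq T_n^{\mathrm{max}}$ and therefore $\beta + T_n^{\mathrm{max}} - T_n^{\mathrm{mec}} \geq \beta > 0$; equivalently $F_n \geq \eta_n D_n / c_n$, placing the variable strictly inside the domain of $\log$. Combining a concave objective with a convex feasible set yields the standard form of a convex program (after the usual sign flip for maximization), establishing Lemma~\ref{lem:lem2}. This concavity is exactly what will allow the subsequent algorithmic step to apply KKT conditions or a Lagrangian-based convex solver to obtain $F_n^{*}$ in closed or semi-closed form.
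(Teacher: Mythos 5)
Your proof is correct and follows the standard (and essentially the only natural) route for this lemma: the constraints (\ref{P2a})--(\ref{P2b}) are linear so the feasible set is a convex polyhedron, $E_n^{\mathrm{mec}}$ is constant in $F_n$, and the log term is concave by the composition rule applied to the concave inner map $F_n \mapsto c_n - \eta_n D_n/F_n$ on $F_n>0$, which matches the argument the paper relegates to its supplemental appendix. Your explicit handling of the domain condition $c_n - \eta_n D_n/F_n > 0$ via the deadline constraint is a welcome detail that the problem statement of $\textbf{P1}$ leaves implicit.
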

\begin{proof}
\label{pro:pro2}
{\color{b} The detailed proof is given in Appendix B of the supplemental material.}
\end{proof}
\begin{theorem}
The solution to Problem $\textbf{P1}$, i.e., the optimal computation resource allocated by the E-UAV to the C-UAVs, is given as $\mathcal{F}^{*}=\{F_n^{*}, n\in \mathbf{N_0}\}$, where
\begin{sequation}
    \label{eq-optimal_Fi}
    F_n^{*}=\frac{\eta_n D_n+\sqrt{(\eta_n D_n)^2-4\left(\beta+T_n^{\mathrm{max}}-\frac{D_n}{R_{n,u}}\right)\left(-\frac{\eta_n D_n \alpha_n}{\rho_0+\gamma^{*}}\right)}}{2\left(\beta+T_n^{\mathrm{max}}-\frac{D_n}{R_{n,u}}\right)}. 
\end{sequation}
\end{theorem}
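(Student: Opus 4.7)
The plan is to exploit the convexity of Problem $\textbf{P1}$ established in Lemma \ref{lem:lem2} and apply the Karush--Kuhn--Tucker (KKT) conditions, since for a convex program with differentiable objective and affine constraints, the KKT conditions are both necessary and sufficient for a global optimum. First I would observe that in the objective, $E_n^{\text{mec}} = P_{n,u}D_n/R_{n,u}$ is independent of $\mathcal{F}$, so it contributes only an additive constant and can be dropped when optimizing over $\mathcal{F}$.

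Next, I would form the Lagrangian by introducing a multiplier $\gamma \geq 0$ for the aggregate resource constraint $\sum_{n\in\mathbf{N_0}} F_n \leq F_u^{\max}$, along with multipliers $\mu_n,\nu_n \geq 0$ for the box constraints $0 \leq F_n \leq F_u^{\max}$. Letting $A_n := \beta + T_n^{\max} - D_n/R_{n,u}$ and $B_n := \eta_n D_n$, so that $T_n^{\text{mec}} = D_n/R_{n,u} + B_n/F_n$, the stationarity condition $\partial L/\partial F_n = 0$ becomes
\begin{equation}
\frac{\alpha_n B_n / F_n^{2}}{A_n - B_n/F_n} \;=\; \rho_0 + \gamma - \mu_n + \nu_n.
\end{equation}
Assuming an interior optimum (so $\mu_n = \nu_n = 0$) and clearing denominators, this reduces to the quadratic
\begin{equation}
(\rho_0+\gamma)\, A_n F_n^{2} \;-\; (\rho_0+\gamma)\, B_n F_n \;-\; \alpha_n B_n \;=\; 0,
\end{equation}
whose positive root, selected because $F_n>0$ and $A_n>0$ (the latter forced by the delay constraint $T_n^{\text{mec}} \leq T_n^{\max}$), gives precisely the closed form stated in the theorem. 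The step I view as most delicate is the sign/root selection: I would argue that the discriminant is strictly positive because the term $-4A_n(-\alpha_n B_n/(\rho_0+\gamma^*))$ is positive, and that only the "$+$" root yields $F_n^{*}>B_n/A_n$, which is needed so that $A_n - B_n/F_n^{*} > 0$ inside the logarithm.

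Finally, I would close the argument by specifying how the optimal dual variable $\gamma^{*}$ is determined. If $\sum_n F_n^{*}(0) \leq F_u^{\max}$ with $\gamma=0$, then complementary slackness gives $\gamma^{*}=0$ directly; otherwise, since each $F_n^{*}(\gamma)$ is strictly decreasing in $\gamma$, the unique $\gamma^{*}>0$ satisfying $\sum_{n\in\mathbf{N_0}} F_n^{*}(\gamma^{*}) = F_u^{\max}$ can be found by a one-dimensional bisection. A brief check that each $F_n^{*}(\gamma^{*})$ lies in $[0,F_u^{\max}]$ (so the dropped box-constraint multipliers are indeed zero) then certifies the KKT point, and convexity promotes it to the global maximum, completing the proof.
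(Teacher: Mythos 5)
Your proposal is correct and follows essentially the same route as the paper: exploit the convexity of $\textbf{P1}$ (Lemma \ref{lem:lem2}) together with Slater's condition, form the (partial) Lagrangian with multiplier $\gamma$ for the aggregate constraint \eqref{P2b}, apply the KKT stationarity condition, and solve the resulting quadratic for the positive root to obtain Eq.~\eqref{eq-optimal_Fi}, with $\gamma^{*}$ pinned down by bisection as in Algorithm~\ref{Algorithm 2}. The only difference is that you spell out the details the paper compresses (root selection, positivity of the discriminant, and the determination of $\gamma^{*}$), all of which check out.
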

\begin{proof}
 Since Problem $\textbf{P1}$ is convex and the slater condition is satisfied. Hence, we can solve Problem $\textbf{P1}$ by using the partial Lagrange function, which is given as
\begin{equation}
    \begin{aligned}
        L(\mathcal{F},\gamma)& =\sum_{n\in \mathbf{N_0}}\{\alpha_n \log(\beta+T_n^{\mathrm{max}}-T_n^{\mathrm{mec}})-\beta_nE_n^{\mathrm{mec}} \\
        &-\rho_0F_n\} -\gamma(\sum_{n\in \mathbf{N_0}}F_n-F_u^{\mathrm{max}}),
        \label{L}
    \end{aligned}
\end{equation}
\noindent where $\gamma\geq 0$ is the Lagrange multiplier associated with the computation resource constraint of E-UAV (\ref{P2b}). Subsequently, the  Karush–Kuhn–Tucker (KKT) conditions are employed to take the optimal computation resource allocation $\mathcal{F}$. By using the first-order optimality condition, Eq. \eqref{eq-optimal_Fi} can be achieved.
\end {proof}

\par As shown in Algorithm \ref{Algorithm 2}, the optimal MEC resource allocation can be achieved by applying the bisection method~\cite{computationmodel2}.
\begin{algorithm}[]	
    \label{Algorithm 2}
    \SetAlgoLined
    \KwIn{Task set $\{\mathbf{\Phi}_n\}_{n\in\mathbf{N_0}}$, E-UAV computation resources $F_u^{\text{max}}$.}
    \KwOut{The optimal computation resource allocation $\mathcal{F}^{*}=\{F_n^{*}, n\in \mathbf{N}_0\}$.}
    \textbf{ Initialization:} 
	Search accuracy threshold: $\varepsilon$, the lower bound $\gamma^{\text{min}}=0$ and the upper bound $\gamma^{\text{max}}=\gamma^{\text{bound}}$\;
    \While {$\gamma^{\text{max}}-\gamma^{\text{min}}\geq \varepsilon$}
    {
        Define $\gamma=\frac{\gamma^{\text{min}}+\gamma^{\text{max}}}{2}$\;
        \For {$n\in \mathbf{N}_0$}
        {
            Compute $F_n^{*}$ by substituting $\gamma$ into Eq. (\ref{eq-optimal_Fi});
        }
        \eIf {$\sum_{n\in \mathbf{N}_0}F_n^{*}\geq F_u^{\text{max}}$}
        {
            $\gamma^{\text{min}}=\gamma$\;
        }
        {
            $\gamma^{\text{max}}=\gamma$\;
        }
    }
    \Return{$\mathcal{F}^{*}=\{F_n^{*}, n\in \mathbf{N}_0\}$.}
    \caption{Bisection Algorithm-based MEC Resource Allocation}
\end{algorithm}	

%
%
\subsubsection{{VFC Resource Allocation}} 

\par The VFC resource allocation problem $\textbf{P2}$ aims to maximize the total utility of C-UAVs that offload tasks to terrestrial vehicles by optimizing the resource allocation of vehicle fog nodes. Since the task of each C-UAV is divided into multiple independent sub-tasks and offloaded to a set of vehicle fog nodes for parallel processing, as explained in Section \ref{subsec:Computation Model}, Problem $\textbf{P2}$ is solved by mapping the VFC resource allocation into the vehicle fog node selection and task division, which is formulated as: 
\begin{align}
\textbf{P2}:\ &\underset{\mathbf{\lambda},\mathcal{S}}{\text{max}} \sum_{n\in \mathbf{N}_1}\{\alpha_n \log(\beta +(T_n^{\text{max}}-T_n^{\text{veh}}))-\beta_nE_n^{\text{veh}}\}\label{P3}\\
    \text{s.t.}  \ &\lambda_{n,j}\in [0,1],\forall n\in \mathbf{N}_1, \forall j\in \mathbf{S}_n^{\prime},\tag{\ref{P3}{\text{a}}}\label{P3a}\\
    \ &\sum_{j\in \mathbf{S}_n^{'}}\lambda_{n,j}=1,\forall n\in \mathbf{N}_1,\tag{\ref{P3}{\text{b}}}\label{P3b}\\
    \ &\mathbf{S}_n^{'}\cap \mathbf{S}_j^{'}=\emptyset,\forall n,j\in \mathbf{N}_1,n\neq j,\tag{\ref{P3}{\text{c}}}\label{P3c}\\
    \ &\mathbf{S}_n^{'}\subset \mathbf{S}_n, \forall n\in \mathbf{N}_1.\tag{\ref{P3}{\text{d}}}\label{P3d}
\end{align}
\par Since the communication ranges of C-UAVs do not overlap each other as mentioned in Section \ref{subsec:System Overview}, the selection of vehicle fog nodes for each C-UAV is independent of each other. Therefore, \textbf{P2} can be decomposed into $|\mathbf{N}_1|$ parallel sub-problems, where each sub-problem is expressed as:
\begin{align}
\textbf{P2}^{\prime}:\  & \underset{\lambda_n,\mathbf{S}_n^{\prime}}{\text{max}}\{\alpha_n \log(\beta +T_n^{\text{max}}-T_n^{\text{veh}})-\beta_nE_n^{\text{veh}}\} \label{P3.1}\\
    \text{s.t.}  \ &(\ref{P3a})-(\ref{P3d}). \notag
\end{align}
\par Problem $\textbf{P2}^{\prime}$ is still an MINLP problem, which is difficult to be solved directly. Since the solutions of vehicle fog node selection $\mathbf{S}_n^{\prime}$ and task division $\lambda_n$ are inherently sequential, i.e. the vehicle fog node selection is performed before the task division, this inspires us to solve the problem by designing a two-step optimization procedure which includes the vehicle fog node selection and task division, and the details are as follows.

\par \textbf{(1) Vehicle Fog Node Selection.} Since the mission-critical computing tasks generated by C-UAVs are heterogeneous and delay-sensitive, the vehicle fog nodes are selected according to the different preferences of C-UAVs, with the aim of minimizing the task completion delay. Therefore, we define the preference value of C-UAV $n$ to vehicle $j$ as
\begin{equation}
    \label{eq-preference}
    Pr(n,j) = \frac{D_n}{R_{n,j}} + \frac{\eta_n D_n}{f_j^{\text{veh}}}.
\end{equation}
\par The vehicle fog nodes can be selected based on the following rule.
\begin{theorem}
If the vehicle set $\mathbf{S}_n$ is sorted in increasing order of preference, the top $K_n$ vehicles are the optimal candidate vehicle set $\mathbf{S}_n^{*}$ which minimizes the completion delay of task $\mathbf{\Phi}_n$.
\label{lem:lem3}
\end{theorem}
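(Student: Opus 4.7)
The plan is to prove the theorem by decoupling the inner problem (optimal task division $\lambda_n$ for a fixed vehicle fog node set $\mathbf{S}_n'$) from the outer problem (choosing the set $\mathbf{S}_n'$ itself). First I would fix an arbitrary candidate set $\mathbf{S}_n' \subseteq \mathbf{S}_n$ with $|\mathbf{S}_n'|\le K_n$ and study the completion delay
\[
T_n^{\text{veh}} = \max_{j\in \mathbf{S}_n'}\bigl(\lambda_{n,j} Pr(n,j)\bigr),
\]
using the fact that $\frac{D_n}{R_{n,j}}+\frac{\eta_n D_n}{f_j^{\text{veh}}}=Pr(n,j)$ so that the per-vehicle sub-task latency is precisely $\lambda_{n,j}Pr(n,j)$. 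A standard water-filling/equalization argument (or a direct exchange argument) then shows that the minimum of this max under $\sum_{j} \lambda_{n,j}=1$, $\lambda_{n,j}\ge 0$ is achieved only when all active vehicles finish simultaneously, i.e. $\lambda_{n,j}Pr(n,j)=c$ for every $j\in \mathbf{S}_n'$; otherwise one could shift a small amount of load from the maximizer to a strictly faster node and decrease the max.

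Next I would plug the equalization condition into $\sum_{j}\lambda_{n,j}=1$ to solve for $c$ in closed form, obtaining
\[
T_n^{\text{veh},*}(\mathbf{S}_n') \;=\; \Bigl(\sum_{j\in \mathbf{S}_n'} \tfrac{1}{Pr(n,j)}\Bigr)^{-1}.
\]
This collapses the outer problem into a purely combinatorial one: choose $\mathbf{S}_n'\subseteq \mathbf{S}_n$ with $|\mathbf{S}_n'|\le K_n$ to \emph{maximize} the sum $\sum_{j\in \mathbf{S}_n'}\frac{1}{Pr(n,j)}$. Since every summand is strictly positive, the sum is monotone non-decreasing in $\mathbf{S}_n'$, so the cardinality constraint is tight whenever $|\mathbf{S}_n|\ge K_n$. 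A simple greedy/exchange argument then shows that the optimal $K_n$-subset consists of the vehicles with the $K_n$ largest values of $1/Pr(n,j)$, which is exactly the top $K_n$ entries after sorting $\mathbf{S}_n$ in increasing order of $Pr(n,j)$.

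Finally I would assemble the two steps: for any other feasible set $\tilde{\mathbf{S}}_n'$, the inequality $\sum_{j\in \tilde{\mathbf{S}}_n'} 1/Pr(n,j) \le \sum_{j\in \mathbf{S}_n^{*}} 1/Pr(n,j)$ together with the closed form above yields $T_n^{\text{veh},*}(\tilde{\mathbf{S}}_n') \ge T_n^{\text{veh},*}(\mathbf{S}_n^{*})$, proving optimality. The part I expect to be the main obstacle is making the equalization step fully rigorous rather than hand-wavy: I need to rule out degenerate optima where some $\lambda_{n,j}=0$ (handled by observing that dropping a node only shrinks $\sum 1/Pr(n,j)$, hence weakly worsens the delay) and to justify the exchange move formally. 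A minor bookkeeping issue is the boundary case $|\mathbf{S}_n|<K_n$, where the sub-channel constraint is inactive and the statement degenerates to selecting all available vehicles; this falls out of the same monotonicity argument.
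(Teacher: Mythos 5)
Your proposal is correct, but it takes a genuinely different route from the paper. The paper's proof never solves the inner task-division problem: it fixes an \emph{arbitrary} division vector $\lambda_n=\{\lambda_{n,1},\dots,\lambda_{n,K_n}\}$, applies it index-by-index to both candidate sets after sorting each by preference, and uses the order-statistic fact that the $j$-th smallest preference in the full set $\mathbf{S}_n$ is no larger than the $j$-th smallest in any $K_n$-subset, so that $\lambda_{n,j}Pr(n,m_j)\le\lambda_{n,j}Pr(n,z_j)$ termwise and hence the max is dominated. This yields $T_n(\mathbf{S}_n^{*},\lambda_n)\le T_n(\mathbf{S}_n^{\prime},\lambda_n)$ for \emph{every} division, which is slightly stronger than what the theorem needs and, importantly, matches how the result is actually used downstream: the paper does not deploy the delay-optimal division but instead optimizes a utility that also penalizes energy (Problem $\textbf{P2}^{\prime\prime}$, solved by the GA in Algorithm 4), so a selection rule that is good uniformly over divisions is the more natural lemma. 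Your route instead solves the inner min-max exactly by equalization, obtaining $T_n^{\text{veh},*}(\mathbf{S}_n^{\prime})=\bigl(\sum_{j\in\mathbf{S}_n^{\prime}}1/Pr(n,j)\bigr)^{-1}$, and reduces the outer problem to maximizing a sum of positive reciprocals, which is trivially solved by the $K_n$ smallest preferences. This costs you the extra rigor you correctly flag (justifying equalization and excluding $\lambda_{n,j}=0$), but buys a closed-form characterization of the delay-optimal split as a byproduct --- something the paper does not state and which would only be the true optimizer of $\textbf{P2}^{\prime\prime}$ if the energy term were absent. Both arguments are valid proofs of the stated theorem.
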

\begin{proof}
We assume that the set of vehicles within the communication range of C-UAV $n$ are sorted by increasing preference as $\mathbf{S}_n=\{m_1, \ldots, m_{|\mathbf{S}_n|}\}$. Therefore, the optimal candidate vehicle set is $\mathbf{S}_n^{*}=\{m_1, m_2, \ldots, m_{K_n}\}$. Suppose $\lambda_n=\{\lambda_{n, 1}, \ldots, \lambda_{n,K_n}\}$ is an arbitrary set of task partitions for C-UAV $n$, then the completion delay $T_n(\mathbf{S}_n^{*},\lambda_n)$ of task $\mathbf{\Phi}_n$ can be expressed as:
\begin{equation}
\label{eq.comp}
\begin{aligned}
   T_n(\mathbf{S}_n^{*},\lambda_n)&=\max\limits_{j\in \mathbf{S}_n^{*}}\left(\frac{\lambda_{n,j} D_n}{R_{n,j}}+\frac{\lambda_{n,j} \eta_nD_n}{f_j^{\text{veh}}}\right)\\
   &=\max\limits_{j\in \mathbf{S}_n^{*}}\left(\lambda_{n,j}Pr(n,j)\right).
\end{aligned}
\end{equation}
\noindent For any set of candidate vehicles $\mathbf{S}_n^{\prime}=\{z_1, z_2, \ldots, z_{K_n}\}$ sorted in ascending order of preference, the following formula holds.
\begin{equation}
\begin{split}
    &\lambda_{n, j} Pr(n,m_j)\leq  \lambda_{n, j} Pr(n,z_j),\ j\in \{1,2,\ldots,K_n\}.
\end{split}
\end{equation}
\noindent Therefore, $T_n(\mathbf{S}_n^{*},\lambda_n)\leq T_n(\mathbf{S}_n^{\prime},\lambda_n)$, $\mathbf{S}_n^{*}$ is the optimal candidate vehicle set.
\label{pro:pro3}
\end{proof}
\par Based on the Theorem \ref{lem:lem3}, the method of vehicle fog node selection is shown in Algorithm \ref{Algorithm 3}.
\begin{algorithm}
    \label{Algorithm 3}
    \SetAlgoLined
    \KwIn{Task $\mathbf{\Phi}_n(t)$ and the vehicle set $\mathbf{S}_n$.}
    \KwOut{The optimal candidate vehicle set $\mathbf{S}_n^{*}$.}
    \eIf {$|\mathbf{S}_n|\leq K_n$}
    {
        $\mathbf{S}_n^{*} = \mathbf{S}_n$;
    }
    {
        Calculate the preference value of all vehicles in set $\mathbf{S}_n$ based on Eq. \eqref{eq-preference}\;
        Select the top $K_n$ vehicles with the smallest preference value as the optimal candidate vehicle set $\mathbf{S}_n^{*}$\;
    }
    \Return{$\mathbf{S}_n^{*}$.}
    \caption{Vehicle Fog Node Selection}
\end{algorithm}
%
%


\par \textbf{(2) Task Division.} Given the selection of vehicle fog nodes $\mathbf{S}_n^{*}$, problem $\textbf{P2}^{\prime}$ can be transformed into a task division problem, which is expressed as follows:
\begin{align}
    \textbf{P2}^{\prime \prime}:\  & \underset{\lambda_n}{\text{max}}\{\alpha_n \log(\beta +T_n^{\text{max}}-T_n^{\text{veh}})-\beta_nE_n^{\text{veh}}\}\label{P3.2}\\
        \text{s.t.}  \ &(\ref{P3a})-(\ref{P3b}). \notag   
\end{align}
\par The service delay $T_n^{\text{veh}}$ of VFC-assisted task offloading is a maximum function given in Eq.~\eqref{eq.veh-delay}, which makes the problem $\textbf{P2}^{\prime \prime}$ nondifferentiable. Therefore, it is difficult to directly solve the problem $\textbf{P2}^{\prime \prime}$. Algorithms based on evolutionary computation have the potential to solve this problem, which does not require convexity and differentiability of the optimization problem. To this end, we design a task division algorithm by employing genetic algorithm (GA) because of its global search ability, parallel processing capability, and strong robustness. {\color{color} Moreover, since the problem $\textbf{P2}^{\prime \prime}$ has a small-scale solution space (i.e., $|\lambda_n|\leq K_n$), the running time of the algorithm can be guaranteed for real-time decision-making.}

\par In particular, GA inspires from biological evolution process~\cite{GA2}, in which a population with size $L$ is first initialized, and each individual in the population represents a potential solution to the optimization problem. Then, the fitness of each individual in the population is evaluated based on the objective function (\ref{P3.2}), and $L$ parents are chosen from the population according to the fitness of these individuals. Moreover, $L$ parents produce $L$ offspring through crossover operation, and $L$ offspring mutate with a certain probability to form the next generation population. Over successive population iterations, the optimal or the feasible sub-optimal solution is obtained. Different from the unconstrained optimization problem, Problem $\textbf{P2}^{\prime \prime}$ is restricted by the equality constraint (i.e., $(\ref{P3b})$). However, the traditional GA cannot directly solve the constrained optimization problems~\cite{penalty}. Therefore, we first handle the constraint with the following additional operation.

\par To satisfy equality constraint (\ref{P3b}), after each generation population is formed, each individual $X_l=\{x_{l1},x_{l2},\dots,x_{lK}\}$ ($K=|\lambda_n|$) is normalized as
\begin{equation}
	x_{lj}=\frac{x_{lj}}{\sum_{k=1}^{K}x_{lk}}, j\in \{1,2,\dots,K\}. \label{eq.linear-constraint}
\end{equation}
\par The task division algorithm is shown in Algorithm \ref{Algorithm 4} and the specific genetic operators are given as follows:
\begin{algorithm}
    \label{Algorithm 4}
    \SetAlgoLined
    \KwIn{Task $\mathbf{\Phi}_n(t)$, vehicle set $\mathbf{S}_n^{*}$, maximum evolution generation $G$, population size $L$, crossover probability $pc$ and mutation probability $pm$.} 
    \KwOut{The optimal task division set $\lambda_n^{*}$.}
    \tcp{Initialize the population}
    \For{$l=1$ to $L$}
    {
        Initialize the $l$th individual of the population through the initialization operation\;
        Normalize the individual based on Eq. \eqref{eq.linear-constraint}\;
    }
    \For{$g=1$ to $G$}
    {
        Calculate the fitness of each individual in the population based on (\ref{P3.2})\;
        Select the elite individual $X^{*}$ with the highest fitness in the population\;
        $\lambda_n^{*}=X^{*}$\;
        Select the parent population through the selection operation\;
        Obtain the offspring population through the crossover operation\;
        Mutate the offspring population through the mutation operation\;
        Normalize the individual based on Eq. \eqref{eq.linear-constraint}\;
        Replace the lowest fitness individual in the offspring population with the elite individual\;
    }
    \Return{$\lambda_n^{*}$.}
    \caption{Task Division}
\end{algorithm}

\par \textbf{Initialization:} In this phase, the initial population is generated by using a real-coding scheme to randomly create $L$ individuals. Each individual $X_l=\{x_{l1},x_{l2},\dots,x_{lK}\}$ $(l\in \{1,2,\dots,L\})$ represents a potential solution of the optimization problem, which is called a chromosome containing $K$ genes. The value of each gene $x_{l,k}$ $(k\in \{1,2,\dots,K\})$ is generated by a random number generator within the range defined by constraint (\ref{P3a}). Specifically, each gene is generated as $x_{l,k}=X^{\text{rand}}$, where $X^{\text{rand}}$ is a uniformly distributed random value within the interval (0, 1).

\par \textbf{Selection:} The elite-reserved 2-tournament selection strategy is employed in this stage, which has the advantages of efficiency and simplicity~\cite{review-A7}. Specifically, two individuals are randomly selected from the population each time, and the individual with higher fitness is chosen as the parent. Then, a parent population is formed until the number of parents reaches $L$. Moreover, the individual with the highest fitness value in the population is selected as the elite individual, which will be used to replace the individual with the lowest fitness in the offspring population. 

\par \textbf{Crossover:} {\color{color1} New offspring are produced by crossing over the genes of the parents.} Specifically, a pair of parents are randomly selected each time from the parent population, and a random number $rand_1\in (0,1)$ is generated at the same time. If $rand_1$ is less than the crossover probability $pc$, two offspring are created by crossing the two parents. Otherwise, the pair of parent individuals does not participate in crossover and are directly copied as offspring. This process continues until an offspring population of size $L$ is obtained. In this work, two offspring (i.e., $\widetilde{X_1}$ and $\widetilde{X_2}$) are produced by a linear combination of the two parents (i.e., $X_1$ and $X_2$). The crossover operation is described as follows:
\begin{equation}
	\label{eq.crossover}
	\begin{cases}
		\widetilde{X_1} = \tau X_1 + (1-\tau)X_2,\\
		\widetilde{X_2} = \tau X_2 + (1-\tau)X_1,
	\end{cases}
\end{equation}
\noindent where $\tau$ is a random number within interval (0, 1).

\par \textbf{Mutation}: The mutation operation acting on genes helps to improve the diversity of individuals. For each gene of each individual in the offspring population, a random number $rand_2\in (0,1)$ is generated to determine whether the gene is mutated. If $rand_2$ is less than the mutation probability $pm$, the gene is {\color{color1} mutated}. Otherwise, the gene remains unchanged. When individual $X_l=\{x_{l1},x_{l2},\dots,x_{lK}\}$ mutates into new individual $\widetilde{X_l}=\{x_{l1},x_{l2},\dots,\widetilde{x_{lj}},\dots,\widetilde{x_{lk}},\dots,x_{lK}\}$, new genes $\widetilde{x_{lj}}$ and $\widetilde{x_{lk}}$ can be expressed as follows: 
\begin{equation}
	\begin{cases}
		\widetilde{x_{lj}}=X^{\text{rand}},\\
		\widetilde{x_{lk}}=X^{\text{rand}}.
	\end{cases}
\end{equation}

%
%
\subsection{Main Steps of MVTORA and Analysis}
\label{subsec:Complexity Analysis}
\begin{algorithm}
    \label{Algorithm 5}
    \SetAlgoLined
    \KwIn{The state information of the E-UAV, rescue vehicles, and C-UAVs $\{\mathbf{St}^{u},\mathbf{St}^{\text{veh}},\mathbf{St}^{\text{uav}}\}$.} 
    \KwOut{Time-average system utility $TSU$.}
    \textbf{Initialization:} Initialize $TSU=0$\;
    \For{$t=0$ to $T-1$}
    {
        \For{each C-UAV $n\in \mathcal{N}$}
        {
            Obtain the vehicle set $\mathbf{S}_n(t)$ and vehicle state information $\{\mathbf{St}_m^{\text{veh}}(t)\}_{m\in \mathbf{S}_n(t)}$\;
            Calculate the utility of local computing $U_n^{\text{loc}}(t)$ based on Eq. \eqref{eq.utility}\;
            Call Algorithm \ref{Algorithm 3} and Algorithm \ref{Algorithm 4} to obtain $\mathbf{S}_n^{*}$ and $\lambda_n^{*}$\;
            Calculate the utility $U_n^{\text{veh}}(t)$ based on $\mathbf{S}_n^{*}$, $\lambda_n^{*}$ and Eq. \eqref{eq.utility}\;
            \eIf{$U_n^{\text{veh}}(t)>U_n^{\text{loc}}(t)$}
            {
                $a_n(t)=a_n^{\text{veh}}(t)=1$\;
                $U_n(t)=U_n^{\text{veh}}(t)$\;
            }
            {
                $a_n(t)=a_n^{\text{loc}}(t)=1$\;
                $U_n(t)=U_n^{\text{loc}}(t)$\;
            }
        }
        E-UAV obtains the initial task offloading decision $\mathbf{A}^{\text{ini}}(t)=\{a_n(t)\}_{n\in \mathcal{N}}$ and corresponding utility $\mathbf{U}^{\text{ini}}(t)=\{U_n(t)\}_{n\in \mathcal{N}}$ of all C-UAVs\;
        E-UAV calls Algorithm \ref{Algorithm 1} and Algorithm \ref{Algorithm 2} to obtain $\mathbf{A}^{*}(t)$ and $\mathcal{F}^{*}(t)$ based on $\mathbf{A}^{\text{ini}}(t)$ and $\mathbf{U}^{\text{ini}}(t)$\;
        All C-UAVs perform their tasks based on $\mathbf{A}^{*}(t)$ and obtain corresponding utility $U_n^{*}(t)$\;
        System utility $SU(t)=\sum_{n=1}^NU_n^{*}(t)$\;
        $TSU = TSU+SU(t)$\;
    }
    $TSU = TSU/T$\;
    \Return{$TSU$.}
    \caption{MVTORA}
\end{algorithm}
\par The main steps of MVTORA are described in Algorithm \ref{Algorithm 5}, and the corresponding performance and complexity analysis is presented as follows.
\subsubsection{Performance Analysis}
\label{subsubsec:Performance Analysis}
In general, there may be more than one Nash equilibrium in the task offloading game. {\color{b}However, computing the best Nash equilibrium has been proven to be an NP-hard problem~\cite{AckermannRV08,FabrikantPT04}. Therefore, a large amount of computational overhead incurred by seeking the best Nash equilibrium may not be suitable for real-time decision making in the considered post-disaster rescue scenario. To evaluate the performance of the Nash equilibrium solution, the price of anarchy (PoA)~\cite{POA} is introduced to quantify the gap between the worst-case Nash equilibrium and the centralized optimal solutions, which can provide a bound on the sub-optimality of our proposed algorithm.} Let $\Upsilon$ denote the set of Nash equilibrium of the task offloading game, $\mathcal{A}=(a_1,\dots,a_N)$ denote a strategy profile, and $\Tilde{\mathcal{A}}=(\Tilde{a_1},\dots,\Tilde{a_N})$ denote the centralized optimal solution that maximizes the system utility, i.e., $\Tilde{\mathcal{A}}=\arg \max _{\mathcal{A} \in \mathbb{A}} \sum_{n \in \mathcal{N}} U_n(\mathcal{A})$. Then the PoA can be given as:
\begin{equation}
\label{eq.def_poa}
    \mathrm{PoA}=\frac{\min _{\mathcal{A} \in \Upsilon} \sum_{n \in \mathcal{N}} U_n(\mathcal{A})}{\sum_{n \in \mathcal{N}} U_n(\Tilde{\mathcal{A}})}.
\end{equation}
\noindent For the metric of system utility, a larger PoA indicates better performance of the task offloading game solution. 
The following Theorem~\ref{theorem4} analyzes the result.
\begin{theorem}
\label{theorem4}
For the task offloading game among multiple C-UAVs, the PoA defined in Eq.~\eqref{eq.def_poa} satisfies:
\begin{equation}
    \label{eq.POA}
    \frac{\sum_{n=1}^N \max \left\{U_n^{\mathrm{loc}}, U_n^{\mathrm{veh}}\right\}}{\sum_{n=1}^N \max \left\{U_n^{\mathrm{loc}}, U_n^{\mathrm{veh}}, U_{n,\mathrm{max}}^{\mathrm{mec}}\right.\}}\leq \mathrm{PoA} \leq 1.
\end{equation}
\noindent where $U_{n,\mathrm{max}}^{\mathrm{mec}}=\alpha_n \log(\beta+T_n^{\mathrm{max}}-\frac{D_n}{R_{n,u}}-\frac{\eta_n D_n}{\hat{F_n}}) - \beta_n E_n^{\mathrm{mec}} - \rho_0 \hat{F_n}$ and $\hat{F_n}=\min \left\{\frac{\eta_n D_n+\sqrt{(\eta_n D_n)^2-4\left(\beta+T_n^{\mathrm{max}}-\frac{D_n}{R_{n,u}}\right)\left(-\frac{\eta_n D_n \alpha_n}{\rho_0}\right)}}{2\left(\beta+T_n^{\mathrm{max}}-\frac{D_n}{R_{n,u}}\right)},F_u^{\mathrm{max}}\right\}$.
\end{theorem}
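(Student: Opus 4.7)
My plan is to establish the two inequalities in \eqref{eq.POA} separately, bounding the numerator of $\mathrm{PoA}$ from below and the denominator from above, and then combining the two.

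The upper bound $\mathrm{PoA}\le 1$ should follow almost by definition. Since $\Tilde{\mathcal{A}}$ is chosen to maximize $\sum_{n\in \mathcal{N}} U_n(\mathcal{A})$ over the entire strategy space $\mathbb{A}$, and any Nash equilibrium $\mathcal{A}\in \Upsilon$ is in particular an element of $\mathbb{A}$, we have $\sum_{n} U_n(\mathcal{A})\le \sum_{n} U_n(\Tilde{\mathcal{A}})$ for every such $\mathcal{A}$. Taking the minimum over $\Upsilon$ on the left-hand side and then dividing yields $\mathrm{PoA}\le 1$.

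For the lower bound, I would first bound the denominator. Any individual summand $U_n(\Tilde{\mathcal{A}})$ is at most the largest attainable utility for player $n$ across the three available modes: $U_n^{\mathrm{loc}}$ and $U_n^{\mathrm{veh}}$ are the values already defined in \eqref{eq.utility}, while for the MEC mode the utility $U_n^{\mathrm{mec}}$ is maximized with respect to $F_n$ by balancing the logarithmic delay gain against the linear resource price. Applying the first-order condition in the spirit of \eqref{eq-optimal_Fi} with $\gamma=0$, and then clipping by the physical budget $F_u^{\mathrm{max}}$, yields $\hat{F_n}$; substituting back gives $U_{n,\mathrm{max}}^{\mathrm{mec}}$. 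Summing over $n$ produces the desired upper bound on $\sum_{n} U_n(\Tilde{\mathcal{A}})$. For the numerator, I would invoke the Nash-equilibrium property: no player can profit from a unilateral deviation. Since player $n$ can always deviate to local computing (this option is feasible regardless of other players' strategies) and, under the implicit vehicle-availability assumption, also to VFC offloading, we obtain $U_n(\mathcal{A})\ge \max\{U_n^{\mathrm{loc}}, U_n^{\mathrm{veh}}\}$ for every $\mathcal{A}\in \Upsilon$. Summing over $n$ and minimizing over $\Upsilon$ yields the numerator of the stated lower bound, and combining the two bounds gives \eqref{eq.POA}.

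The step I expect to require the most care is the deviation bound $U_n(\mathcal{A})\ge \max\{U_n^{\mathrm{loc}}, U_n^{\mathrm{veh}}\}$ at a Nash equilibrium. While the local-computing deviation is unambiguous because its payoff does not depend on any other player's strategy, the VFC deviation is subtler: by constraint \eqref{Pg}, each vehicle fog node can serve only one C-UAV, so the VFC payoff actually available to player $n$ may depend on which vehicles have already been captured by other players in the equilibrium. I would handle this either by invoking a sufficiently large vehicle density $\rho_v$ so that every C-UAV finds $K_n$ idle vehicles within its communication range, or by interpreting $U_n^{\mathrm{veh}}$ as the residual-access VFC utility and checking that the stated form is recovered under the standing modeling assumptions. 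The other step worth verifying in detail is the derivation of $\hat{F_n}$ as the individual maximizer of $U_n^{\mathrm{mec}}$, but this is a direct application of the KKT argument already developed for Problem $\textbf{P1}$, plus the projection onto $[0, F_u^{\mathrm{max}}]$.
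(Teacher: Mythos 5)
Your proposal is correct and follows essentially the same route as the paper's proof: $\mathrm{PoA}\le 1$ from the optimality of $\Tilde{\mathcal{A}}$, the numerator bounded below by noting that at any Nash equilibrium each player's utility must dominate its unilateral deviations to local computing and to VFC, and the denominator bounded above by each player's best per-mode utility, with $\hat{F_n}$ obtained from the first-order condition of $U_n^{\mathrm{mec}}$ in $F_n$ (i.e., Eq.~\eqref{eq-optimal_Fi} with $\gamma=0$) projected onto $[0,F_u^{\mathrm{max}}]$. The subtlety you flag about the VFC deviation is resolved by the paper's standing assumption that C-UAV service areas do not overlap (Section~\ref{subsec:System Overview}, and the resulting decomposition of $\textbf{P2}$ into independent per-UAV subproblems), so each C-UAV's candidate vehicle set --- and hence $U_n^{\mathrm{veh}}$ --- is independent of the other players' strategies and the deviation bound $U_n(\mathcal{A})\ge \max\{U_n^{\mathrm{loc}},U_n^{\mathrm{veh}}\}$ holds exactly as you state.
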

\begin{proof}
{\color{b} The detailed proof is given in Appendix C of the supplemental material.}
\end{proof}
\subsubsection{Complexity Analysis}
\label{subsubsec:Complexity Analysis}
\begin{theorem}
 MVTORA has a polynomial computational complexity in each time slot, i.e.,  $O\left(I_c N\log _2\left(\left(\gamma^{\max }-\gamma^{\min }\right)/\varepsilon\right)\right)$, where $I_c$ represents the number of iterations required for Algorithm \ref{Algorithm 1} to converge to the Nash equilibrium, $N$ is the number of C-UAVs, $\gamma^{\min}$ and $\gamma^{\max }$ are the lower and upper bounds of $\gamma$ respectively, and $\varepsilon$ is the search accuracy. 
\end{theorem}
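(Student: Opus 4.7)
The plan is to decompose the per-time-slot cost of Algorithm \ref{Algorithm 5} into the costs of its constituent routines and then aggregate them. Within a single time slot, MVTORA performs (i) a local pre-computation loop over the $N$ C-UAVs that invokes Algorithm \ref{Algorithm 3} (vehicle fog node selection) and Algorithm \ref{Algorithm 4} (GA-based task division) for each C-UAV, and (ii) a single invocation of the task offloading game in Algorithm \ref{Algorithm 1}, which in turn repeatedly calls Algorithm \ref{Algorithm 2} (bisection-based MEC resource allocation). Establishing a polynomial bound therefore reduces to bounding each of these pieces and identifying the dominant one.

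First I would bound Algorithm \ref{Algorithm 2}. Because the while loop halves $[\gamma^{\min},\gamma^{\max}]$ at every step, termination requires $O\!\left(\log_2((\gamma^{\max}-\gamma^{\min})/\varepsilon)\right)$ bisection iterations, and within each iteration $F_n^{*}$ is computed in closed form via Eq.~\eqref{eq-optimal_Fi}, so the per-C-UAV update is constant time. Next I would analyze Algorithm \ref{Algorithm 1}: by Theorem~\ref{lem:lem1} the game is an exact potential game with finite strategy sets, and by Definition~\ref{def:def_FIP} the better-response process terminates after $I_c$ improvement rounds; each round sweeps the $N$ C-UAVs and triggers one call to Algorithm \ref{Algorithm 2} per C-UAV trial.

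For the auxiliary routines I would argue that Algorithm \ref{Algorithm 3} is dominated by sorting $|\mathbf{S}_n|$ preferences, and Algorithm \ref{Algorithm 4} runs in $O(GLK)$ with fixed generations $G$, population size $L$, and chromosome length $K\le K_n$; both bounds are independent of the bisection accuracy and of $I_c$, so they contribute strictly lower-order polynomial terms when summed over the $N$ C-UAVs in the pre-computation loop. Aggregating the nested cost of Algorithms \ref{Algorithm 1} and \ref{Algorithm 2} with these lower-order contributions then gives the claimed dominant term $O\!\left(I_c N \log_2((\gamma^{\max}-\gamma^{\min})/\varepsilon)\right)$.

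The main obstacle will be keeping the accounting for Algorithm \ref{Algorithm 2} tight enough to avoid an extra $N$ factor. A naive reading of its inner for-loop over $\mathbf{N}_0$ yields $O(|\mathbf{N}_0|\log_2(\cdot))$ per call, which when multiplied by the $I_c N$ outer trials of Algorithm \ref{Algorithm 1} would inflate the bound to $O(I_c N^{2}\log_2(\cdot))$. To recover the stated bound I would argue that, in the context of a single better-response trial, only the C-UAV whose decision is being flipped needs its KKT condition re-solved against the same bisection envelope used for the current equilibrium, so the amortized per-trial bisection work is $O(\log_2((\gamma^{\max}-\gamma^{\min})/\varepsilon))$. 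Verifying that the VFC sub-routines do not exceed this dominant term then completes the polynomial-time argument.
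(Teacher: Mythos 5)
Your overall decomposition---bounding Algorithm \ref{Algorithm 2} by $\log_2\left(\left(\gamma^{\max}-\gamma^{\min}\right)/\varepsilon\right)$ bisection iterations with a closed-form update per step, bounding Algorithm \ref{Algorithm 1} by $I_c$ better-response sweeps of $N$ trials each via the exact-potential/FIP argument, and relegating Algorithms \ref{Algorithm 3} and \ref{Algorithm 4} to lower-order terms with fixed $G$, $L$, and $K_n$---is the same route the paper takes: the stated bound is the product of the three nested iteration counts. The step that does not hold up is the amortization you introduce to suppress the extra factor of $|\mathbf{N}_0|$. When C-UAV $n$ tentatively sets $a_n^{\text{mec}}=1$ in a better-response trial, the set $\mathbf{N}_0$ changes, and hence the optimal multiplier $\gamma^{*}$ satisfying $\sum_{n\in \mathbf{N}_0}F_n^{*}(\gamma)=F_u^{\max}$ changes for \emph{every} member of $\mathbf{N}_0$, not only for the flipped C-UAV. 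Each bisection step must re-evaluate the closed form \eqref{eq-optimal_Fi} for all $n\in\mathbf{N}_0$ just to test the feasibility condition $\sum_{n\in\mathbf{N}_0}F_n^{*}\geq F_u^{\max}$ that drives the interval update; there is no fixed ``bisection envelope'' to reuse across trials, so the per-call cost of Algorithm \ref{Algorithm 2} is genuinely $O\left(|\mathbf{N}_0|\log_2\left(\left(\gamma^{\max}-\gamma^{\min}\right)/\varepsilon\right)\right)$.

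Your own careful accounting therefore yields $O\left(I_c N |\mathbf{N}_0| \log_2\left(\left(\gamma^{\max}-\gamma^{\min}\right)/\varepsilon\right)\right)$, i.e., up to $O\left(I_c N^2 \log_2(\cdot)\right)$ in the worst case. This is still polynomial, so the substance of the theorem survives under your argument; but the displayed expression in the statement is obtained only by the coarser convention of multiplying the nested iteration counts and treating the arithmetic inside one bisection step as constant time, which is what the paper does. To make your proof airtight, either drop the amortization claim and report the $N^2$ bound, or state explicitly that the per-step work of the bisection is being counted as $O(1)$.
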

\begin{proof}
{\color{b} The detailed proof is given in Appendix D of the supplemental material.}
\end{proof}

%
%
\section{Simulation Results}
\label{sec:Simulation Results and Analysis}

\par In this section, we perform simulations to validate the effectiveness of our proposed MVTORA approach. Specifically, all the simulations are conducted in MATLAB R2021a on a desktop computer with an AMD Ryzen 7-5800H 3.20-GHz CPU and 16-GB RAM.

%
%
\subsection{Simulation Setup}
\label{subsec:Simulation setups}
\par We consider a three-layer multi-UAV-assisted post-disaster rescue architecture within the area of $2000\times 2000$ $\text{m}^2$, the coordinates of the central point are set as $[0,0,0]$, {\color{b} the distribution density of rescue vehicles is set to $200\ \text{vehicles/km}^2$,} the area is divided equally into the square grids with $400\times 400$ $\text{m}^2$, and $15$ C-UAVs are randomly assigned to perform air search and rescue missions. The flight path of each C-UAV is set to be a circular trajectory with a radius of 100 $\text{m}$ around the center of the square grid, where the C-UAV flies at a constant speed $V = 20$ $\text{m/s}$ and a fixed height $H = 100$ $\text{m}$. {\color{b} In addition, the task generation probability $\rho_n$ is assumed to be uniformly distributed in $[0.8,1]$.} Table \ref{parameters} summarizes the initial values of the other parameters. 

\par To evaluate the performance of the proposed MVTORA approach, we compare it with the following seven benchmark schemes
\begin{itemize}
    \item Entire local computing (ELC): all C-UAVs process their tasks locally.
    \item Entire MEC computing (EMC): all C-UAVs offload their tasks to the E-UAV for execution.
    \item VFC-assisted task offloading (VTO): the tasks generated by C-UAVs can be processed locally or offloaded to ground vehicles for execution. 
    \item MEC-assisted task offloading (MTO): the tasks generated by C-UAVs can be processed locally or offloaded to the E-UAV for execution. 
    \item Only task offloading decision optimization (TODO)~\cite{computationmodel2}: only the task offloading decisions of C-UAVs is optimized, while the edge computation resources are distributed evenly and the vehicle fog nodes are selected randomly with evenly divided tasks.
    \item \textcolor{b}{Non-cooperative game based task offloading (NGTO)~\cite{WangLTZDCZ20}: each C-UAV competitively decides the optimal offloading probability by playing a distributed non-cooperative game.
    \item Dragonfly algorithm (DA)-based task offloading and resource allocation (DATORA)~\cite{Mirjalili16}: the DA is used to solve task offloading and resource allocation.}
\end{itemize}
\begin{table}
	\caption{Simulation Parameters}
	\begin{center}
		\setlength{\tabcolsep}{0.5mm}{
			\begin{tabular}{|c|c|c|c|}
				\hline
				Parameters     &Values                        &Parameters               &Values\\
                    \hline
				$H$            &$100$m                         &$\Delta t$   &$1$s~\cite{parameter-1}\\
                    \hline
                    $H_u$          &$300$m                        &$F_u^{\text{max}}$       &$30$GHz\\
                    \hline
                    $f_m^{\text{veh}}$  &$[0,1]$GHz~\cite{Sun2023}  &$f_n^{\text{uav}}$       &$[1,2]$GHz\\
                    \hline
                    $D_n$      &$[1,3]$Mb~\cite{review-A8}         &$\eta_n$     &$[100,1000]$cycles/bit\\
                    \hline
                    $\Psi$    &$\frac{\pi}{4}$~\cite{parameter-1}    &$T_n^{\text{max}}$       &$[0.5,1]$s\\
                    \hline
                    $\beta_0$      &$1.42\times10^{-4}$~\cite{parameter-1}      &$\kappa$     &$0.2$~\cite{parameter_2}\\
                    \hline
                    $a$    &$10$~\cite{communicationmodel1}    &$b$    &$0.6$~\cite{communicationmodel1}\\
                    \hline
                    $\mu$          &$2.3$~\cite{parameter_2}       &$P_{n,m}$     &$20$dBm\\
                    \hline
                    $P_{n,u}$    &$20$dBm                        &$K_n$                    &$5$\\
                    \hline
                    $\sigma^2$     &$-174$dBm/Hz                  &$B$                      &$200$KHz\\
                    \hline
                    $k$           &$10^{-28}$  &$\rho_0$          &$0.001$\$/GHz\\
                    \hline
				$\alpha_n$    &$0.9$                         &$\beta_n$           &$0.1$\\ 
                    \hline
				$pc$    &$0.8$                            &$pm$           &$0.1$\\ 
                    \hline
				$G$    &$200$                            &$L$           &$50$\\ 
                \hline
		\end{tabular}}
            \label{parameters}
	\end{center}
\end{table}
%
%
\subsection{Evaluation Results}
\label{subsec:Evaluation results}

\par In this section, we first evaluate the convergence and overall system performance of the proposed MVTORA. Furthermore, we compare the impacts of different parameters on the performance of the proposed MVTORA and the benchmark schemes.
%
%
\subsubsection{Convergence and Performance}
\label{subsec:convergence and performance}
\par {\color{b} In this sub-section, we evaluate the performance of the proposed algorithm with different iterations and time slots, respectively. }

\begin{figure}[!hbt] 
	\setlength{\abovecaptionskip}{2pt}%
	\setlength{\belowcaptionskip}{2pt}%
	\centering
	\includegraphics[width =3.5in]{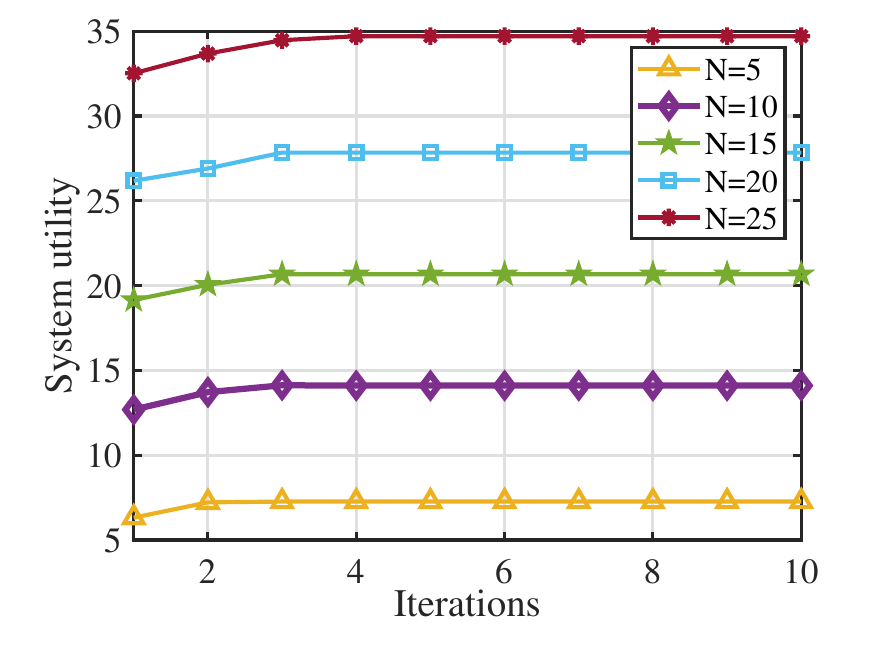}
	\caption{Convergence of MVTORA algorithm.}
    \label{fig.convergence}
\end{figure}
 \begin{figure}[!hbt] 
	\setlength{\abovecaptionskip}{2pt}%
	\setlength{\belowcaptionskip}{2pt}%
	\centering
	\includegraphics[width =3.5in]{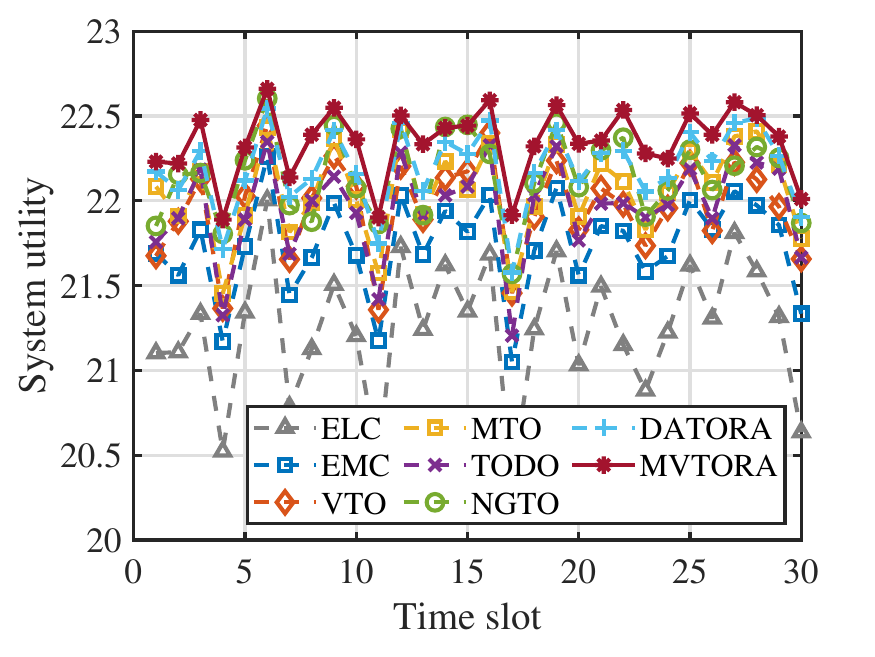}
	\caption{{\color{b} System utility with respect to time slot.}}
    \label{fig.utility-time}
\end{figure}
 
 \par Fig.~\ref{fig.convergence} shows the convergence of the proposed MVTORA under different numbers of C-UAVs. It can be observed that as the number of iterations increases, the system utility keeps on increasing and converges into Nash equilibrium. This is because as the number of iterations increases, each C-UAV attempts to update the offloading strategy to obtain a satisfied utility and eventually reaches the Nash equilibrium state, where no C-UAV can improve the utility even further by unilaterally changing its offloading decision. Thus, the results in the figure demonstrate that the proposed algorithm can achieve a stable state within limited iterations in the scenarios with varying densities. 

{\color{b} \par Fig.~\ref{fig.utility-time} compares the system utility (i.e., the total utility of all C-UAVs) among the abovementioned seven comparison schemes and the proposed MVTORA. Note that in each time slot, real-time decisions of task offloading and resource allocation are determined by running the proposed MVTORA iteratively. Specifically, at the beginning of each time slot, the essential information exchange is performed and our proposed algorithm is executed to make real-time decisions. Our proposed algorithm updates the offloading decisions of C-UAVs iteratively until the offloading decisions of all C-UAVs no longer change, i.e., a Nash equilibrium state is reached. Then, the C-UAVs perform their computing tasks based on the obtained offloading decisions.} 

Moreover, it can be observed from Fig.~\ref{fig.utility-time} that the system utility exhibits irregular fluctuations over time slots. Obviously, this is mainly due to the time-varying nature of the network. Specifically, the computation tasks generated by C-UAVs, the quality of communication links, and the under-utilized computation resources of vehicle fog nodes vary over different time slots. {\color{b} On the other hand, it can be seen that as time elapses, MVTORA outperforms ELC, EMC, VTO, MTO, TODO, NGTO, and DATORA in terms of system utility,} and the reasons are given as follows. First, the entire local computing strategy of ELC could cause overloads of C-UAVs that are restricted by the limited computing resources and onboard battery, leading to additional costs of delay and energy consumption. Furthermore, although the EMC scheme employs the MEC server for task offloading, the limited computing resources of the E-UAV could be a performance bottleneck, especially in disaster areas where terrestrial infrastructures are difficult to be deployed. Moreover, the VFC-assisted offloading strategy of VTO depends heavily on the underutilized resources of ground rescue vehicles, which is insufficient for C-UAVs to perform real-time mission-critical tasks due to the limited resources and mobility of vehicles. {\color{b} Besides, TODO, NGTO, and DATORA outperform ELO, EMC, and VTO, since they integrate the abilities of MEC and VFC. {\color{b} However, TODO only optimizes task offloading without considering the computing resource management, the competitive offloading of NGTO could lead to congestion at the E-UAV, and the heuristic-based strategy of DATORA is sensitive to the initial conditions of the problem, all of which could reduce the accuracy of problem-solving.} Therefore, TODO, NGTO, and DATORA yield worse performances compared to the proposed method. Finally, the superior performance of the proposed MVTORA is mainly due to the joint optimization of the task offloading and aerial-terrestrial resource allocation.} According to the results in Fig.~\ref{fig.utility-time}, it can be concluded that MVTORA has the overall superior performance in the system utility among the eight algorithms. 
%
%
\subsubsection{Impact of Parameters}
\label{subsec:impact of parameters}
\begin{figure*}[!hbt] 
	\centering
	\setlength{\abovecaptionskip}{2pt}%
	\setlength{\belowcaptionskip}{2pt}%
	\subfigure[]
	{
		\begin{minipage}[t]{0.31\linewidth}
			\centering
			\includegraphics[scale=0.4]{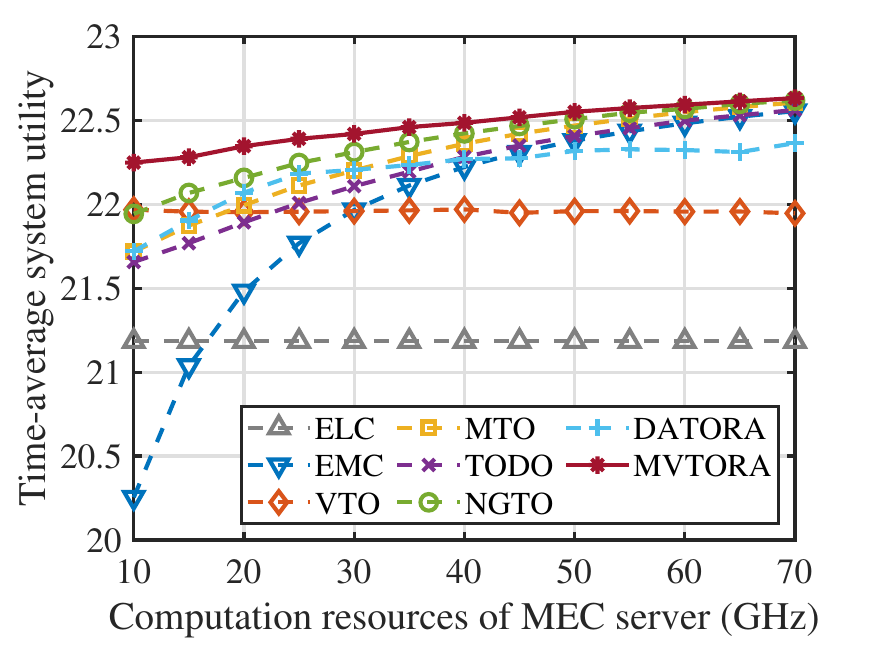}
		\end{minipage}
	}
	\subfigure[]
	{
		\begin{minipage}[t]{0.31\linewidth}
			\centering
			\includegraphics[scale=0.4]{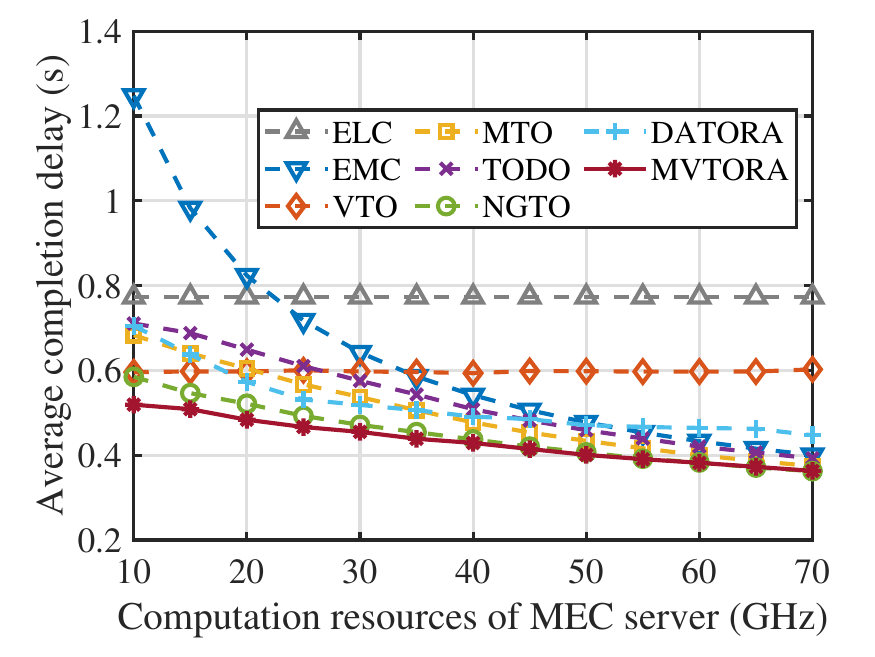}	
		\end{minipage}
	}
	\subfigure[]
	{
		\begin{minipage}[t]{0.31\linewidth}
			\centering
			\includegraphics[scale=0.4]{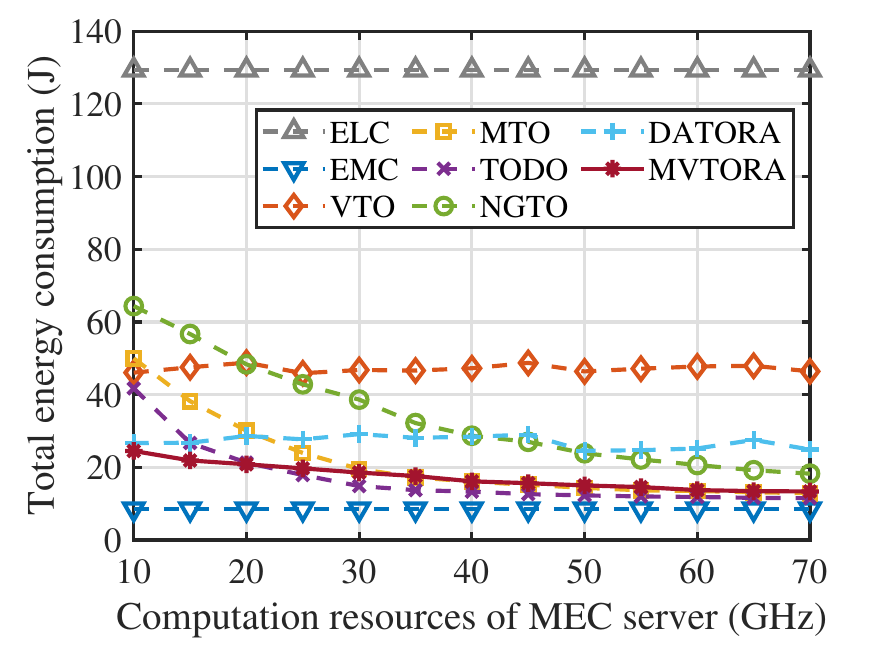}
		\end{minipage}
	}
	\centering
	\caption{{\color{b} System performance with respect to the computation resources of E-UAV. (a) Time-average system utility. (b) Average completion delay. (c) Total energy consumption.}}
	\label{fig_MEC}
	\vspace{-1em}
\end{figure*}
\begin{figure*}[!hbt] 
	\centering
	\setlength{\abovecaptionskip}{2pt}%
	\setlength{\belowcaptionskip}{2pt}%
	\subfigure[]
	{
		\begin{minipage}[t]{0.31\linewidth}
			\centering
			\includegraphics[scale=0.4]{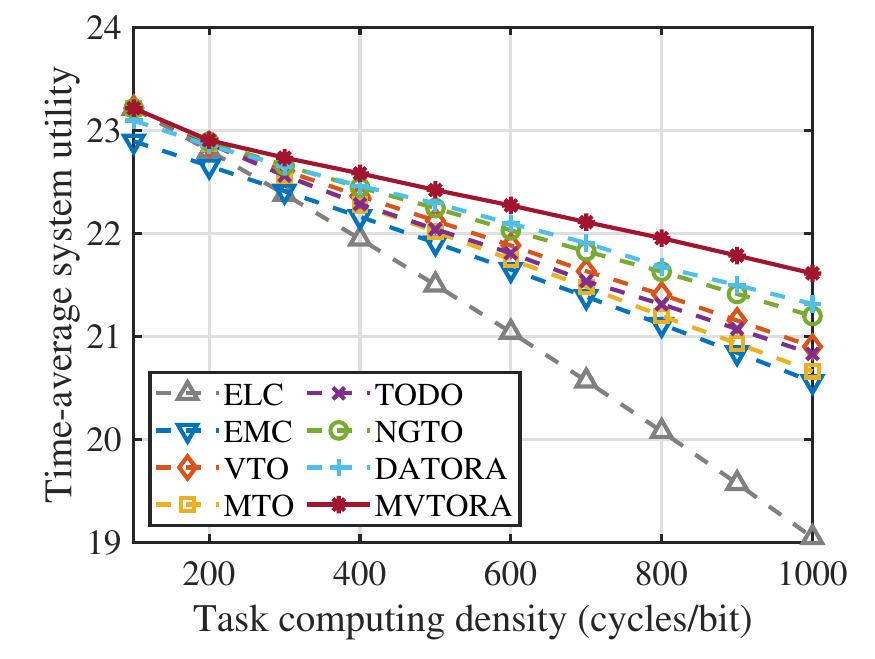}
		\end{minipage}
	}
	\subfigure[]
	{
		\begin{minipage}[t]{0.31\linewidth}
			\centering
			\includegraphics[scale=0.4]{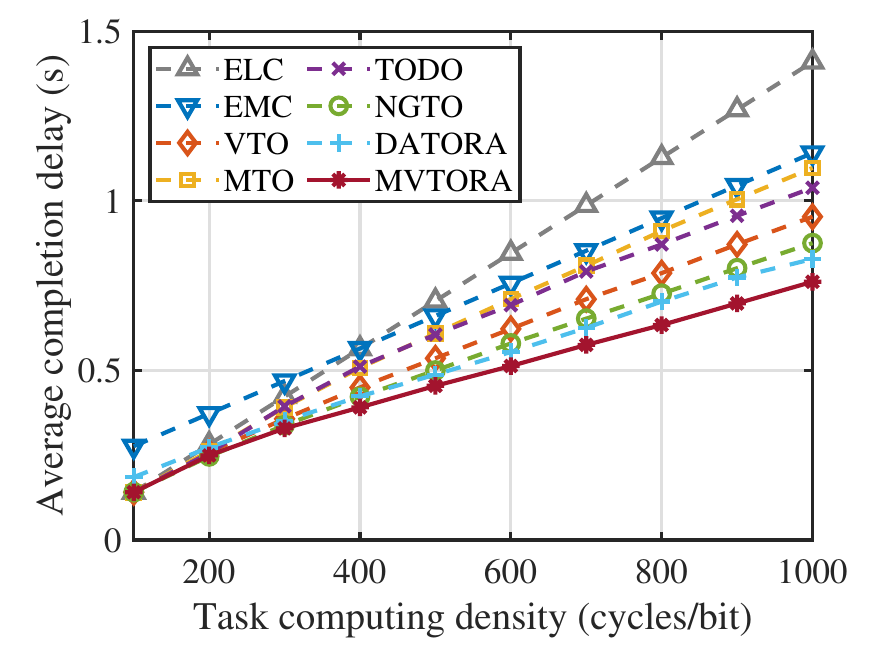}	
		\end{minipage}
	}
	\subfigure[]
	{
		\begin{minipage}[t]{0.31\linewidth}
			\centering
			\includegraphics[scale=0.4]{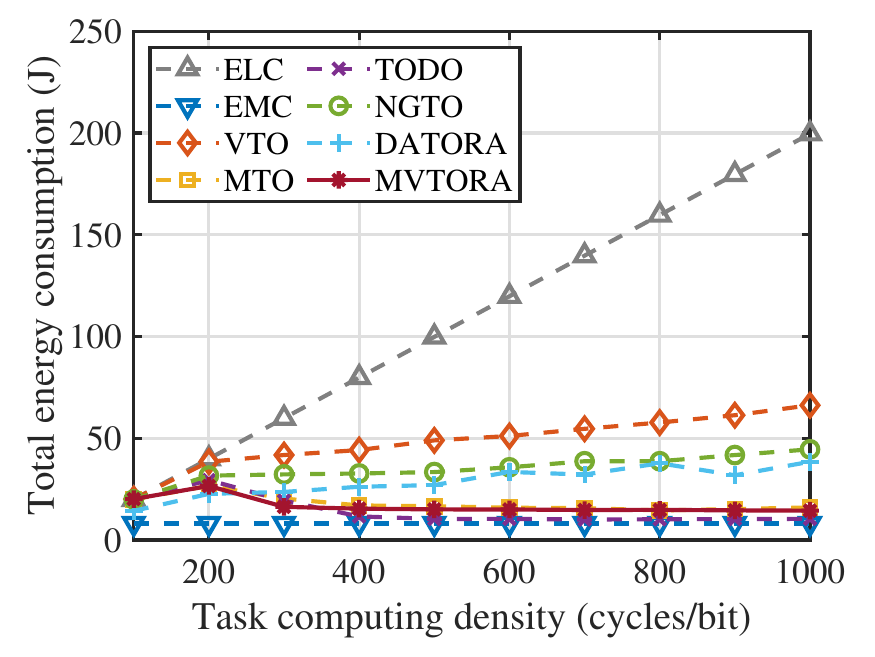}
		\end{minipage}
	}
	\centering
	\caption{{\color{b} System performance with respect to the task computation density of C-UAVs. (a) Time-average system utility. (b) Average completion delay. (c) Total energy consumption.}}
	\label{fig_taskDensity}
	\vspace{-1em}
\end{figure*}
\begin{figure*}[!hbt] 
	\centering
	\setlength{\abovecaptionskip}{2pt}%
	\setlength{\belowcaptionskip}{2pt}%
	\subfigure[]
	{
		\begin{minipage}[t]{0.31\linewidth}
			\centering
			\includegraphics[scale=0.4]{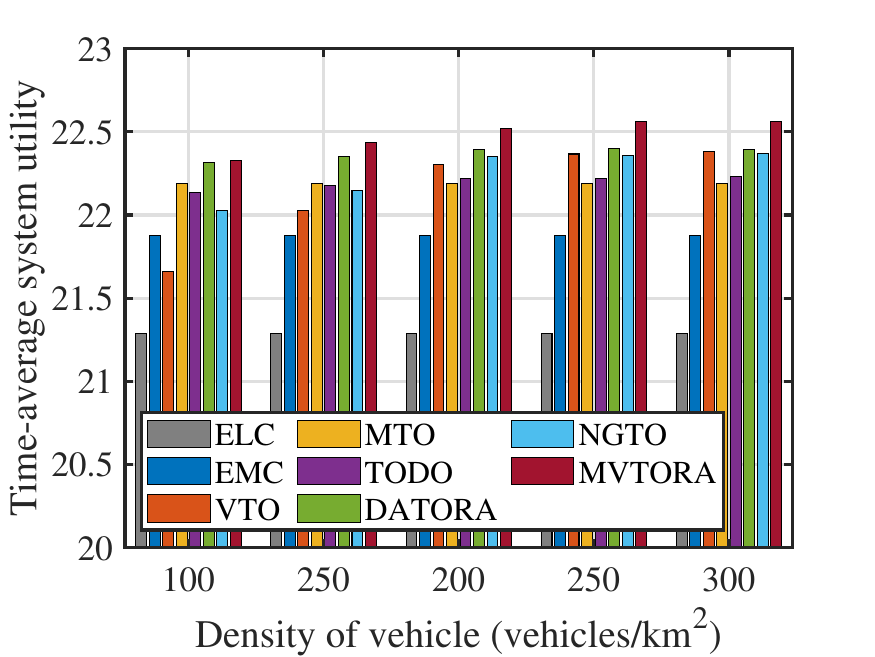}
		\end{minipage}
	}
	\subfigure[]
	{
		\begin{minipage}[t]{0.31\linewidth}
			\centering
			\includegraphics[scale=0.4]{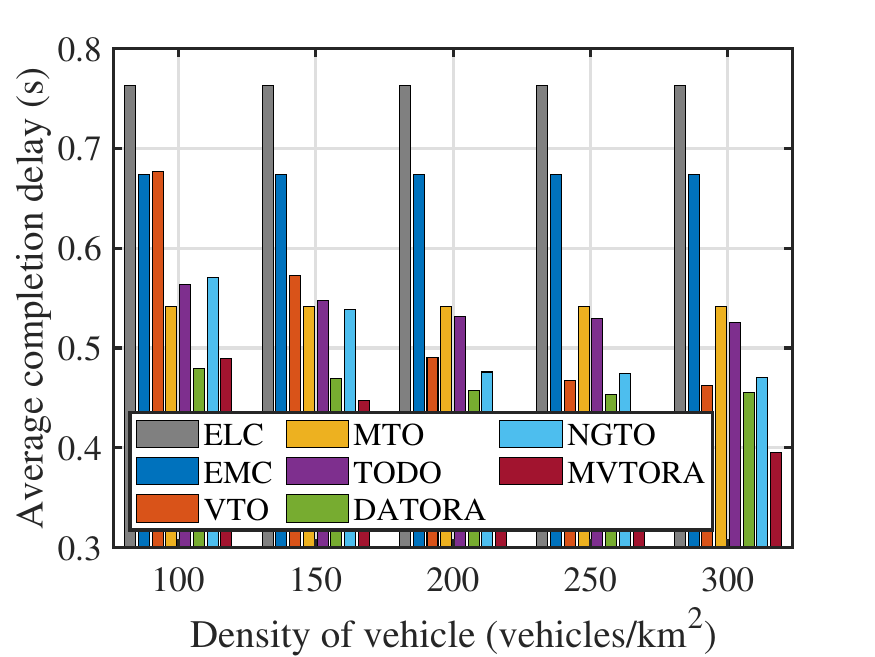}	
		\end{minipage}
	}
	\subfigure[]
	{
		\begin{minipage}[t]{0.31\linewidth}
			\centering
			\includegraphics[scale=0.4]{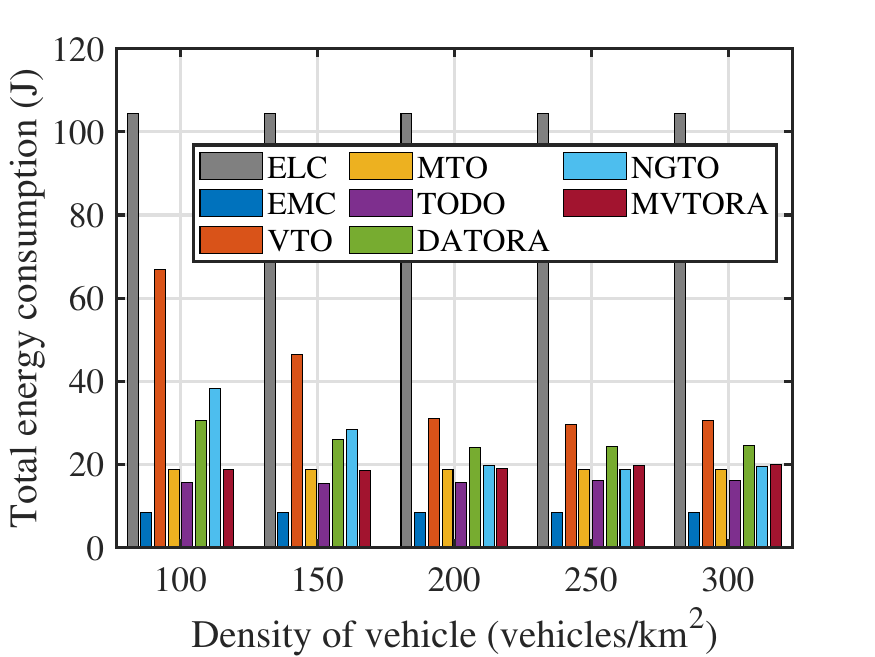}
		\end{minipage}
	}
	\centering
	\caption{{\color{b} System performance with respect to the density of vehicles. (a) Time-average system utility. (b) Average completion delay. (c) Total energy consumption.}}
	\label{fig_vehDensity}
	\vspace{-1em}
\end{figure*}

\par \textbf{\textit{Impact of Edge Computation Resources.}} {\color{b} Figs.~\ref{fig_MEC}(a), \ref{fig_MEC}(b) and \ref{fig_MEC}(c) show the impact of the computation resources of E-UAV on time-average system utility, average completing delay, and total energy consumption for the eight approaches.} First, ELC and VTO maintain nearly constant performance in terms of time-average system utility, average completing delay, and total energy consumption regardless of variations in edge computation resources. This is obviously because ELC and VTO do not use the edge computing resources as explained in Section~\ref{subsec:Simulation setups}. {\color{b} Moreover, with the increasing of edge computation resources, the time-average system utilities of MTO, TODO, NGTO, and DATORA show increasing trends, while their average completing delay and total energy consumption show the opposite trends. This is because more tasks can be offloaded to the edge server with the increasing of the edge computing resources}. Additionally, as the edge computing resources increase, EMC shows the most drastic variation trends in terms of time-average system utility and average completing delay while keeping invariant in terms of total energy consumption. The main reasons are as follows. First, EMC is completely dependent on edge computing, which makes it sensitive to edge computation resources. Second, the total energy consumption is induced by the energy consumption of task transmission, which is independent of the edge computing resources. Finally, compared to the benchmark algorithms, the time-average system utility, average completing delay, and total energy consumption of the proposed MVTORA exhibit slightly steady variation trends with respect to the edge computation resources. The reason is that MVTORA integrates aerial and terrestrial edge computing resources to alleviate the overload of C-UAVs. In conclusion, the set of results in Fig.~\ref{fig_MEC} demonstrates that the proposed MVTORA is able to achieve sustainable computation resource utilization and effective performances in terms of time-average system utility, average completing delay, and total energy consumption with varying computing resources of the edge server.

\par \textbf{\textit{Impact of Task Computation Density.}} {\color{b} Figs.~\ref{fig_taskDensity}(a), \ref{fig_taskDensity}(b), and \ref{fig_taskDensity}(c) display the impact of the task computing density of C-UAVs on the time-average system utility, the average completing delay, and the total energy consumption among the eight approaches, respectively.} First, EMC shows inferior performance in the time-average system utility and average completing delay compared to the other schemes except for ELC, and exhibits the lowest total energy consumption among the six schemes. This is mainly because the MEC offloading strategy of EMC produces lower transmission energy consumption, but the aggregated workload could overload the edge server as the computing density increases. {\color{b} Furthermore, it can be observed that ELC, MTO, VTO, TODO, NGTO, DATORA, and MVTORA achieve similar performances on time-average system utility, average completing delay, and total energy consumption when the task computing density is relatively small (less than 300 cycles/bit). The reason is that local computing could be a favorable choice for these schemes when the computing density of the task is small, which does not generate additional costs of transmission delay and energy consumption. Besides, MTO, TODO, NGTO, DATORA, and MVTORA reveal similar trends in total energy consumption, with a slight initial increase followed by a decrease, and then remaining approximately constant as task computing density increases. This is mainly because most tasks are offloaded to the E-UAVs for execution as the task computing density further increases. As a result, the total energy consumption is mainly form the task transmission energy consumption, which leads to a decreasing trend.} Finally, the MVTORA algorithm shows superior performance in the time-average system utility and average completing delay among the six schemes. This is because the MVTORA algorithm is empowered by aerial and terrestrial edge capabilities by integrating MEC and VFC capabilities, which could deal with high-complexity tasks through parallel processing. The set of simulation results indicates the proposed MVTORA is able to adapt to varying computing densities with relatively superior performances in system utility, completion delay, and energy consumption, especially {\color{color1} in the heavy workload scenario}.

\par \textbf{\textit{Impact of Vehicle Distribution Density.}} Figs.~\ref{fig_vehDensity}(a),~\ref{fig_vehDensity}(b), and~\ref{fig_vehDensity}(c) describe the impact of the vehicle distribution density on the time-average system utility, the average completing delay, and the total energy consumption among the comparative algorithms. First, the time-average system utility, average completing delay, and total energy consumption of ELC, EMC, and MTO remain constant with the increasing of vehicle distribution density. Obviously, this is because ELC, EMC, and MTO do not exploit terrestrial edge computing resources of rescue vehicles. Moreover, TODO shows a slow and slight variation trend in the time-average system utility, the average completing delay, and the total energy consumption with increasing vehicle distribution density. This is mainly due to the lack of an effective VFC-assisted task offloading algorithm, which could lead to insufficient utilization of vehicle fog node resources. Furthermore, as the vehicle distribution density increases, VTO exhibits a significant upward trend in the time-average system utility and substantial downward trends in the average completing delay and total energy consumption. The reason is that VTO can effectively utilize the resources of vehicle fog nodes by optimizing VFC-assisted task offloading. 
{\color{b} In addition, DATORA shows non-obvious trends in terms of the time-average system utility and average completing delay. This is because the heuristic-based method of DATORA relies on the initial condition and simplification of the original problem, which can lead to inaccurate or suboptimal results of resource allocation for vehicle fog nodes, and therefore cannot fully utilize the resources of vehicle fog nodes.} Finally, it can be observed that MVTORA exhibits significantly superior performance compared to the other algorithms in terms of the time-average system utility and average completion delay with low energy consumption. The set of simulation results indicates that the proposed algorithm has better scalability with the increasing of vehicle distribution density.
%
%
{\color{b}
\section{Discussion}
\label{sec:Discussion}

\par {\color{b} In this section, we discuss the generalizability of our method with regard to specific vehicle distribution and mobility models, as well as the rationale behind the selected offloading strategy.}
%
%
\subsection{Impact of Vehicle Distribution and Mobility}
\label{subsec:veh-dis-mob}

\par \textcolor{b} {To explore the generalizability of our approach, we verify the effectiveness of our proposed approach for different vehicle distributions and mobility models. Specifically, we consider the following three cases: \textbf{\textit{i)}} random distribution and random walk model (RD-RWM)~\cite{pearson1905}, \textbf{\textit{ii)}} mobile traffic model (MTM)~\cite{review-A8}, and \textbf{\textit{iii)}} Poisson cluster process and Markovian way point model (PCP-MWPM)~\cite{TabassumSH19}, and the corresponding simulation results are shown in Fig. 1 of Appendix E in the supplemental material. The simulation results illustrate that our proposed approach is also applicable to other vehicle distribution and mobility models.}

%
%
\subsection{Comparison with Mixed Task Offloading Scheme}
\label{subsec:comp-MTOS}

\par \textcolor{b} {To demonstrate the rationality of the offloading decision for our proposed approach,} we compare the proposed task offloading scheme with the mixed task offloading scheme of local computing, MEC, and VFC. Specifically, in Appendix F of the supplementary martial, we first analyze the limitations of the mixed offloading scheme. Then, we compare the performance of our method and the mixed task offloading scheme in terms of time-average system utility, average task completion delay, total energy consumption, and average algorithm running time. The analysis and simulation results demonstrate that our proposed approach is more suitable for the considered post-disaster rescue scenarios.
}
%
%
\section{Conclusion}
\label{sec:Conclusion}

\par In this work, we study the task offloading and resource allocation in UAV networks for post-disaster rescue. First, by integrating the aerial and terrestrial computing capabilities, we propose an MEC-VFC-assisted three-layer computing architecture for post-disaster rescue, which consists of a vehicular fog layer, a UAV edge layer, and a UAV client layer. Furthermore, the JTRAOP is formulated to maximize the time-average utility of the system by jointly optimizing task offloading and computing resource allocation. Since the problem is NP-hard, we develop an MVTORA approach with low complexity to separate the initial problem into the components of task offloading and resource allocation, which are solved by proposing
a game theory-based algorithm for task offloading decision, a convex optimization-based algorithm for MEC resource allocation, and an evolutionary computation-based hybrid algorithm for VFC resource allocation. Simulation results demonstrate the superiority of the proposed MVTORA approach in terms of time-average system utility, average task completion delay, and total energy consumption. In the future, our work will be extended to include UAV trajectory optimization.
\bibliographystyle{IEEEtran}
\bibliography{myref}

\begin{IEEEbiography}[{\includegraphics[width=1in,height=1.25in,clip,keepaspectratio]{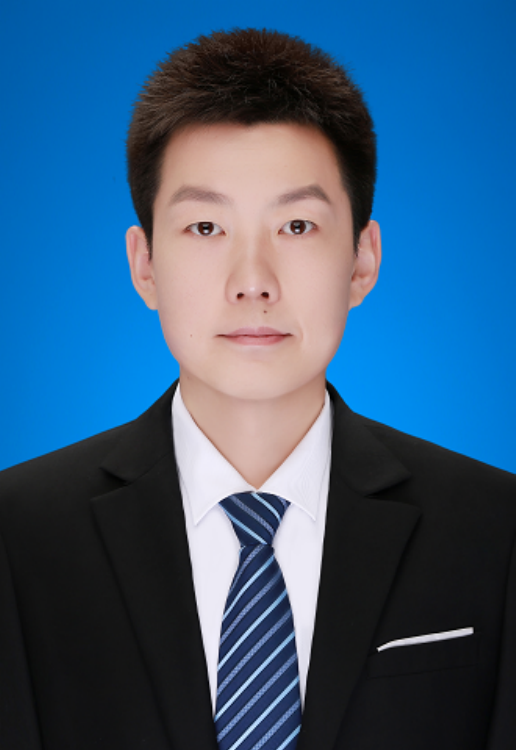}}]{Geng Sun} (S'17-M'19) received the B.S. degree in communication engineering from Dalian Polytechnic University, and the Ph.D. degree in computer science and technology from Jilin University, in 2011 and 2018, respectively. He was a Visiting Researcher with the School of Electrical and Computer Engineering, Georgia Institute of Technology, USA. He is an Associate Professor in College of Computer Science and Technology at Jilin University, and his research interests include wireless networks, UAV communications, collaborative beamforming and optimizations.
\end{IEEEbiography}

\begin{IEEEbiography}[{\includegraphics[width=1in,height=1.25in,clip,keepaspectratio]{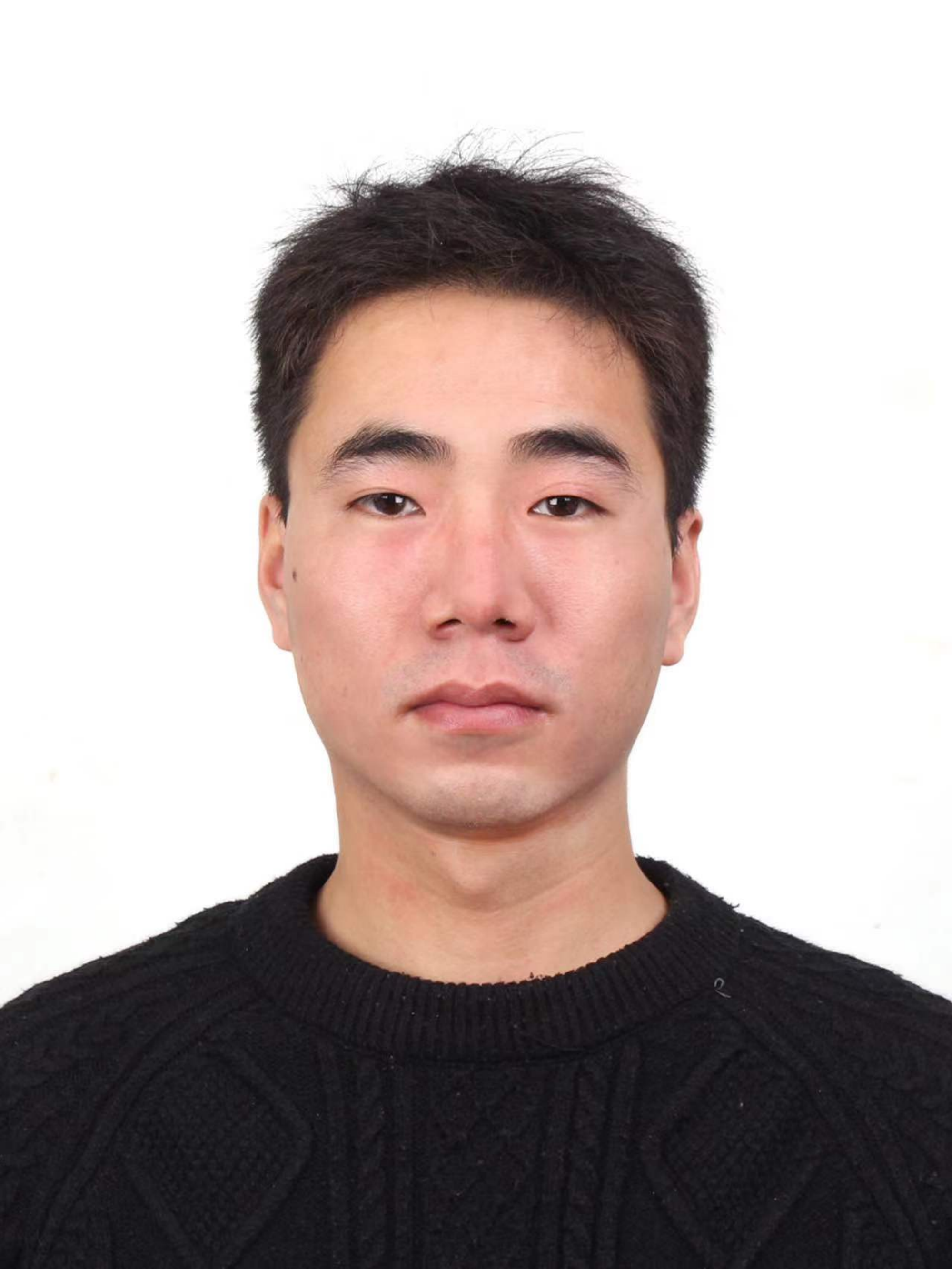}}]{Long He} received a BS degree in Computer Science and Technology from Chengdu University of Technology, Sichuan, China, in 2019. He is currently working toward the PhD degree in Computer Science and Technology at Jilin University, Changchun, China. His research interests include vehicular networks and edge computing.
\end{IEEEbiography}

\begin{IEEEbiography}[{\includegraphics[width=1in,height=1.25in,clip,keepaspectratio]{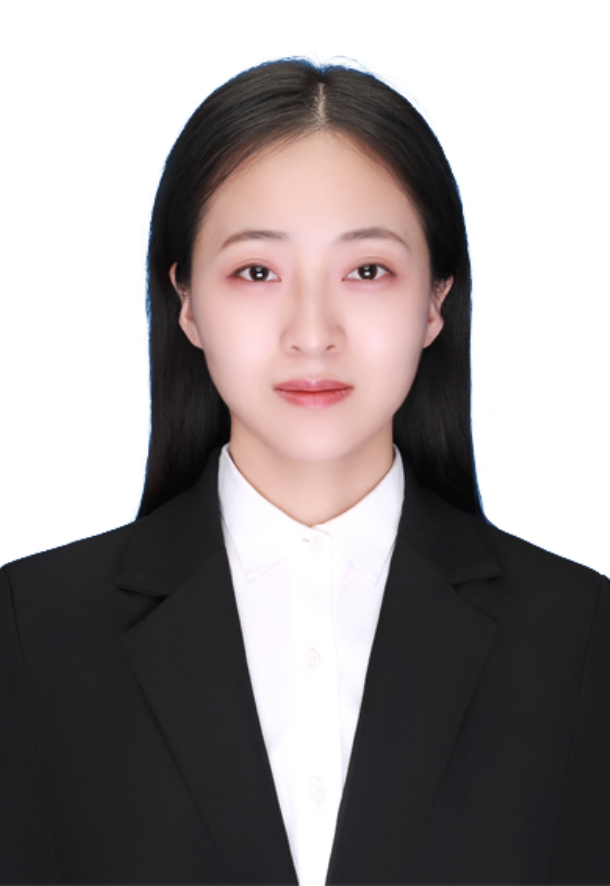}}]{Zemin Sun} received a BS degree in Software Engineering, an MS degree and a Ph.D degree in Computer Science and Technology from Jilin University, Changchun, China, in 2015, 2018, and 2022, respectively. Her research interests include vehicular networks, edge computing, and game theory. 
\end{IEEEbiography}

\begin{IEEEbiography}[{\includegraphics[width=1in,height=1.25in,clip,keepaspectratio]{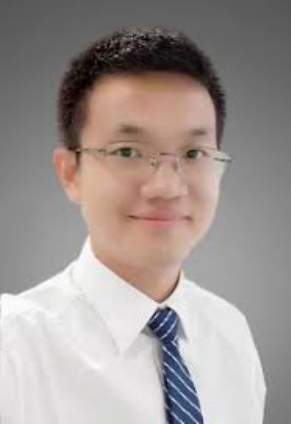}}]{Qingqing Wu} (S’13-M’16-SM’21) received the B.Eng. and the Ph.D. degrees in Electronic Engineering from South China University of Technology and Shanghai Jiao Tong University (SJTU) in 2012 and 2016, respectively. From 2016 to 2020, he was a Research Fellow in the Department of Electrical and Computer Engineering at National University of Singapore. He is currently an Associate Professor with Shanghai Jiao Tong University. His current research interest includes intelligent reflecting surface (IRS), unmanned aerial vehicle (UAV) communications, and MIMO transceiver design. He has coauthored more than 100 IEEE journal papers with 26 ESI highly cited papers and 8 ESI hot papers, which have received more than 18,000 Google citations. He was listed as the Clarivate ESI Highly Cited Researcher in 2022 and 2021, the Most Influential Scholar Award in AI-2000 by Aminer in 2021 and World’s Top 2\% Scientist by Stanford University in 2020 and 2021.
\end{IEEEbiography}

\begin{IEEEbiography}[{\includegraphics[width=1in,height=1.25in,clip,keepaspectratio]{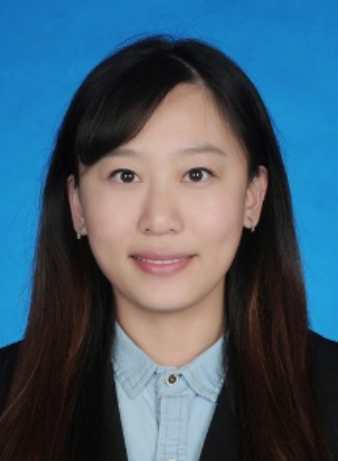}}]{Shuang Liang} received the B.S. degree in Communication Engineering from Dalian Polytechnic University, China in 2011, the M.S. degree in Software Engineering from Jilin University, China in 2017, and the Ph.D. degree in Computer Science from Jilin University, China in 2022. She is a post-doctoral in the School of Information Science and Technology, Northeast Normal University, and her research interests focus on wireless communication and UAV networks.
\end{IEEEbiography}

\begin{IEEEbiography}[{\includegraphics[width=1in,height=1.25in,clip,keepaspectratio]{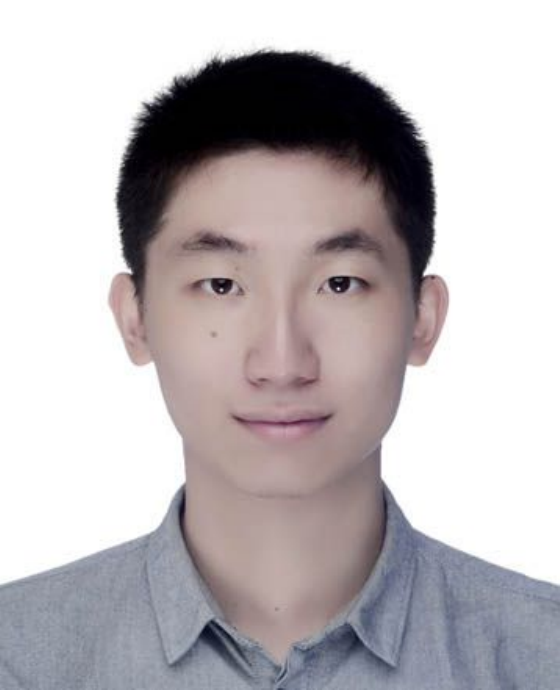}}]{Jiahui Li}
(S'21) received a BS degree in Software Engineering, and an MS degree in Computer Science and Technology from Jilin University, Changchun, China, in 2018 and 2021, respectively. He is currently studying Computer Science at Jilin University to get a Ph.D. degree, and also a visiting Ph. D. at Singapore University of Technology and Design (SUTD), Singapore. His current research focuses on UAV networks, antenna arrays, and optimization.
\end{IEEEbiography}

\begin{IEEEbiography}[{\includegraphics[width=1in,height=1.25in,clip,keepaspectratio]{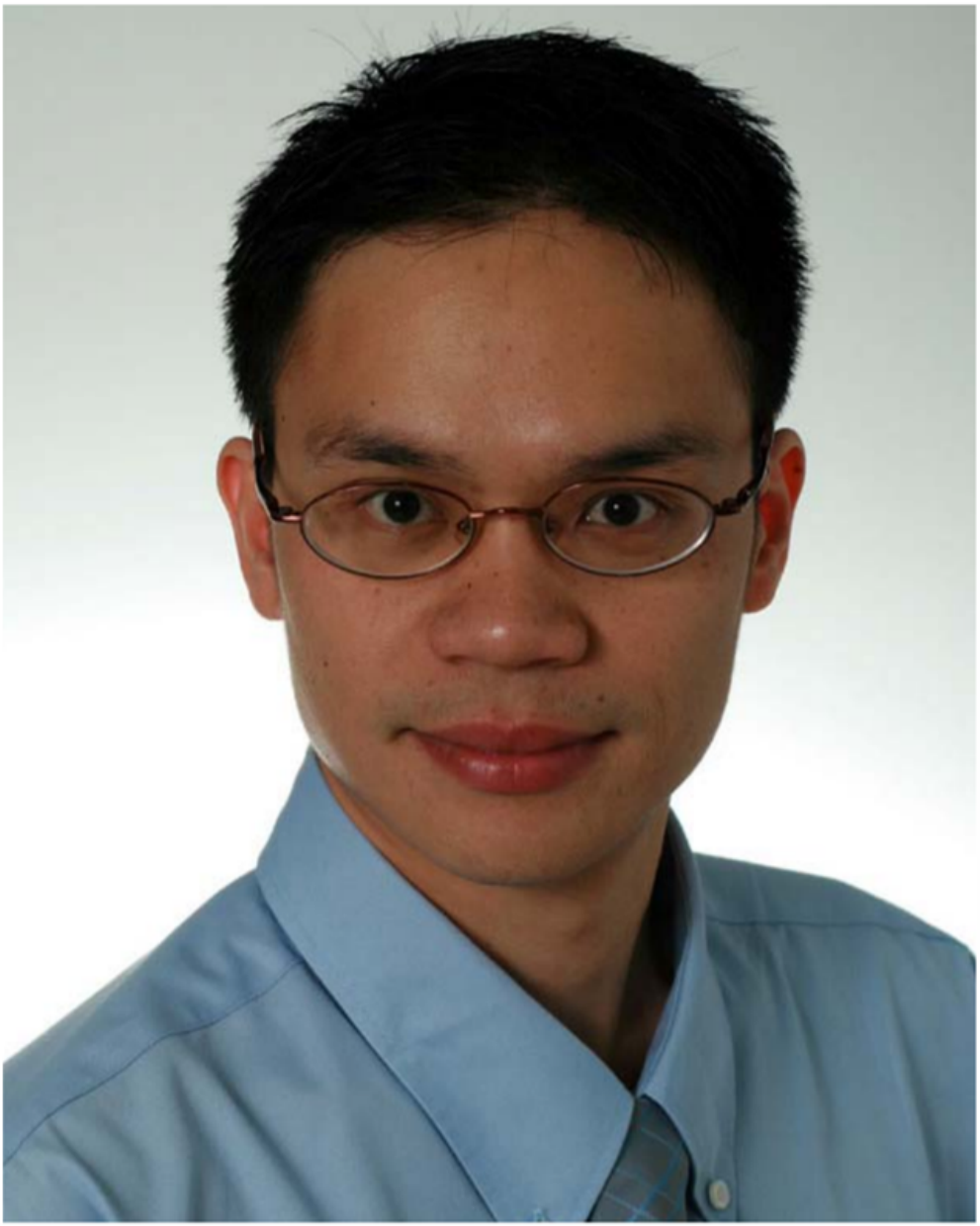}}]{Dusit Niyato} (Fellow, IEEE) received the B.Eng. degree from the King Mongkuts Institute of Technology Ladkrabang (KMITL), Thailand, in 1999, and the Ph.D. degree in electrical and computer engineering from the University of Manitoba, Canada, in 2008. He is currently a Professor with the School of Computer Science and Engineering, Nanyang Technological University, Singapore. His research interests include the Internet of Things (IoT), machine learning, and incentive mechanism design.
\end{IEEEbiography}

\begin{IEEEbiography}[{\includegraphics[width=1in,height=1.25in,clip,keepaspectratio]{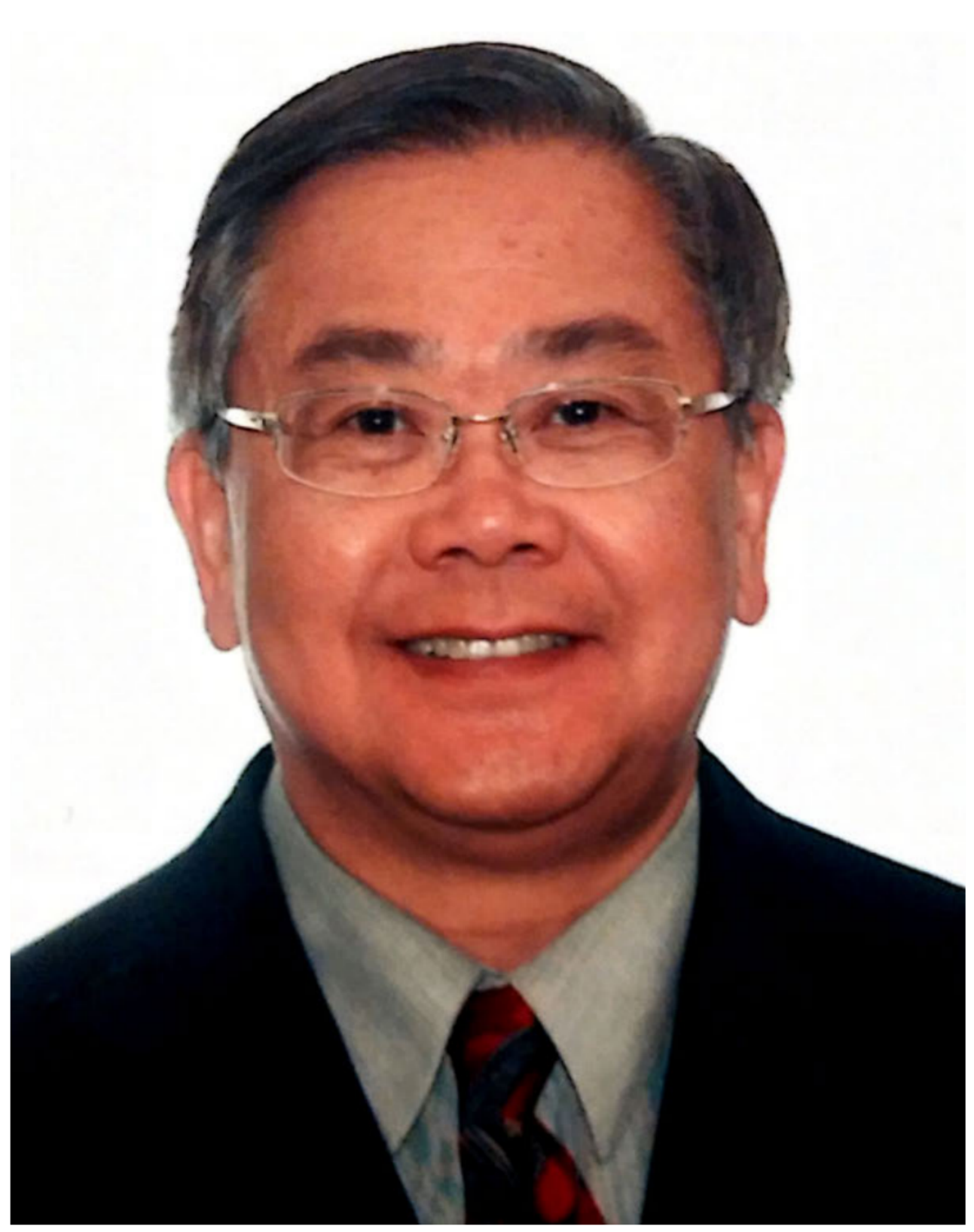}}]{Victor C. M. Leung} (Life Fellow, IEEE) is a Distinguished Professor of computer science and software engineering with Shenzhen University,
China. He is also an Emeritus Professor of electrial and computer engineering and the Director of the Laboratory for Wireless Networks and Mobile Systems at the University of British Columbia (UBC). His research is in the broad areas of wireless networks and mobile systems. He has co-authored more than 1300 journal/conference papers and book chapters. Dr. Leung is serving on the editorial boards of IEEE Transactions on Green Communications and Networking, IEEE Transactions on Cloud Computing, IEEE Access, and several other journals. He received the IEEE Vancouver Section Centennial Award, 2011 UBC Killam Research Prize, 2017 Canadian Award for Telecommunications Research, and 2018 IEEE TCGCC Distinguished Technical Achievement Recognition Award. He co-authored papers that won the 2017 IEEE ComSoc Fred W. Ellersick Prize, 2017 IEEE Systems Journal Best Paper Award, 2018 IEEE CSIM Best Journal Paper Award, and 2019 IEEE TCGCC Best Journal Paper Award. He is a Life Fellow of IEEE, and a Fellow of the Royal Society of Canada, Canadian Academy of Engineering, and Engineering Institute of Canada. He is named in the current Clarivate Analytics list of Highly Cited Researchers.
\end{IEEEbiography}
\end{document}